\documentclass[twocolumn]{autart}   

\usepackage{amsmath}
\usepackage{algorithm}
\usepackage{algpseudocode}
\usepackage{amssymb}
\usepackage{color}
\usepackage{enumerate}
\usepackage{wrapfig}
\usepackage{graphicx}           

\usepackage[dvips]{epsfig}    
\usepackage[round, sort]{natbib}
\usepackage{hyperref}

\usepackage{booktabs}
\usepackage{multirow}
\usepackage{enumitem}
\usepackage{subfigure}

\hypersetup{
	colorlinks = true,
	citecolor = cyan,
}

\newtheorem{theorem}{Theorem}
\newtheorem{remark}{Remark} %[theorem]
\newtheorem{assumption}{Assumption}

\newtheorem{problem}{Problem}

\allowdisplaybreaks
\begin{document}	
\begin{frontmatter}

	\title{Data-Driven Robust MPC for Unknown Nonlinear Systems via Set-Membership Learning} 
	
	\thanks[footnoteinfo]{The work was supported in part by the National Natural Science Foundation of China under Grants U23B2059,  and 62088101. Frank Allg\"{o}wer is thankful that his work was funded by Deutsche Forschungsgemeinschaft (DFG, German Research Foundation) under Germany’s Excellence Strategy-EXC 2075-390740016 and within grant AL 316/15-1-468094890. The authors thank the International Max Planck Research School
		for Intelligent Systems (IMPRS-IS) for supporting Yifan Xie.
		\emph{(Corresponding author: Gang Wang.)} \\
	 \emph{Email address:} weiyuzhou@bit.edu.cn (Y. Wei), wenjie.liu@ntu.edu.sg (W. Liu), yifan.xie@ist.uni-stuttgart.de (Y. Xie),  frank.allgower@ist.uni-stuttgart.de (F. Allg\"{o}wer), sunjian@bit.edu.cn (J. Sun), gangwang@bit.edu.cn (G. Wang).}

	\author[Bit]{Yuzhou Wei},
	\author[Ntu]{Wenjie Liu}, 
	\author[Stuttgart]{Yifan Xie},
    \author[Stuttgart]{Frank Allg\"{o}wer},
    \author[Bit]{Jian Sun},    
	\author[Bit]{Gang Wang}

	\address[Bit]{School of Automation, Beijing Institute of Technology, Beijing 100081, China}      
     \address[Ntu]{School of Electrical and Electronic Engineering, Nanyang Technological University, Singapore 639798} 
	\address[Stuttgart]{Institute for Systems Theory and Automatic Control, University of Stuttgart, Stuttgart 70550, Germany}             
	%\address{Email:~\{,~,~,~,\\ 
	%		,~sunjian@bit.edu.cn,~\}}
		\maketitle

	%%%%%%%%%%%%%%%%%%%%%%%%%%%%%%%%%%%%%%%%%%%%%%%%%%%%%%%%%%%%%%%%%%%%%%%%%%%%%%%
	\begin{abstract}
    Data-driven model predictive control (MPC) has become an attractive approach for controlling unknown systems, especially when data are corrupted by noise. However, most existing data-driven MPC methods focus on linear systems, and little attention has been given to nonlinear dynamics under disturbances. To fill this gap, we propose a robust data-driven min-max MPC scheme for unknown nonlinear systems with process disturbances. We represent the unknown nonlinear dynamics using vector fields built from a dictionary of basis functions, yielding an equivalent linear form with unknown matrices. These unknown matrices are characterized by a set-membership representation derived from noisy input-state data. Using this uncertainty description, we formulate a min-max MPC problem. Two online scenarios are studied: i) when state measurements are noise-free, and, ii) when they are corrupted by process disturbance. For each case, we derive a Lyapunov-based semidefinite program (SDP) to compute a stabilizing state-feedback controller. The resulting schemes are shown to guarantee recursive feasibility and either exponential or robust stability of the closed-loop system depending on whether there is process disturbance. Simulation studies on benchmark examples illustrate the effectiveness and competitive performance of the proposed approach compared to existing data-driven and model-based controllers.
	\end{abstract}

	\begin{keyword} 
    Data-driven control, nonlinear system, robust control, model predictive control.
	\end{keyword}

       \end{frontmatter}	
	%%%%%%%%%%%%%%%%%%%%%%%%%%%%%%%%%%%%%%%%%%%%%%%%%%%%%%%%%%%%%%%%%%%%%%%%%%%%%%%%
\section{Introduction}\label{sec.intro} 
Data-driven control refers to the paradigm of designing controllers directly from experimental data, bypassing explicit system identification steps \cite{2013Model,Wangyb,datadriven1,datadriven2,wllw}. This approach has attracted significant interest because it can simplify the design as well as stability analysis and has been successfully applied to various domains such as robotics, energy systems, and process control \cite{ral2023zhou,locomotion,powerCDC}.  In particular, data-driven model predictive control (MPC) has emerged as a powerful framework for handling unknown systems
\cite{liu2022data1}. Like classical MPC, data-driven MPC can enforce state and input constraints and provide guarantees on closed-loop stability and performance, but it does so without requiring an identified model. For example, by leveraging the \emph{Willems et al}’s fundamental lemma, one can synthesize optimal control policies directly from measured input–output trajectories (e.g., the DeePC approach \cite{willems2005note}), 
A multitude of recent works have also extended data-driven MPC to handle practical issues such as output measurement noise and network attacks \cite{liu2022data1,liu2022data2}. 

Despite the effectiveness of data-driven MPC in handling measurement noise, many real-world systems are also influenced by process disturbances (due to unmodeled dynamics, unknown inputs, or adversarial interference), which directly affect state evolution. Unlike measurement noise, process disturbances require a different design strategy. Motivated by classical min-max MPC for model-based dynamical systems
\cite{Bemporadmm2003,MPCLMI1996,Wan2003}, recent research has begun to explore data-driven min–max MPC schemes that use offline noisy trajectories and online measurements to achieve robustness \cite{XieAuto}. These methods aim to adapt the control input to minimize the worst-case cost over all system models consistent with the data, thereby ensuring robust performance and stabilization under uncertainty.

In contrast to the extensive literature on linear systems, results for nonlinear systems are much more limited. A common strategy is to exploit known structural forms of the nonlinearity \cite{koopman,volterra}. For instance, the Koopman operator framework lifts a nonlinear system into a higher-dimensional linear space, enabling data-driven control for certain bilinear systems \cite{koopman}.  Other methods include linearly parameterized models with known basis functions, Gaussian process models, local linear or polynomial approximations, and feedback linearization \cite{gaussianprocesses,linear,cancellation}. For example,  work \cite{gaussianprocesses} provided data-driven control laws for general polynomial systems by carefully bounding the Taylor remainder. Alternatively,  \cite{cancellation} represented the vector field using a dictionary of functions and synthesized a controller via a data-dependent SDP. However, most of these works focus on static or offline controllers. To our knowledge, the problem of designing online data-driven state-feedback controllers for general nonlinear systems that explicitly account for both offline training data and online process disturbances remains open.

In this paper, we extend data-driven min–max MPC to unknown nonlinear systems with disturbances. Our approach begins by lifting the nonlinear dynamics into a pseudo-linear form using a library of known basis functions, thereby preserving the nonlinear structure. Specifically, we assume the unknown vector field can be written as $g(x)=A_sZ(x)$ for some unknown matrix $A_s$ and known lifting $Z(x)$. We then use noisy offline input–state data to construct a set-membership representation of the unknown system matrices. Based on this uncertainty set, we formulate a nonlinear data-driven min–max MPC problem. We consider two scenarios: i) online state measurements are noise-free, and, ii) they are corrupted by unknown but bounded disturbances. For each scenario, we derive a data-driven SDP using Lyapunov conditions, whose solution yields a stabilizing state-feedback controller gain. The resulting receding-horizon scheme guarantees exponential or robust stability of the closed-loop data-driven control system under uncertainty.

Our work is inspired by \cite{XieAuto,YifanX,weiauto2}. However, compared with these results, the primary technical challenge addressed here lies in the fact that the data-based SDP must simultaneously certify stability and constraint satisfaction for both the linear and nonlinear components of the system. This leads to the construction of a new constraint, in which nonlinear effects are incorporated through additional quadratic terms. Such a requirement cannot be met by a direct extension of the linear quadratic inequalities used in \cite{XieAuto,YifanX,weiauto2}. As a consequence, the derived recursive feasibility, constraint satisfaction, and robust stability guarantees for the closed-loop system explicitly account for the coupled effects of linear dynamics and nonlinear components, which are absent in the purely linear setting.
In a nutshell, the contributions of this paper are summarized as follows.
\begin{itemize} 
	\item[c1)] A representation of the unknown nonlinear system matrices is constructed from noisy input–state data via a quadratic matrix inequality. Two scenarios are then considered. Firstly, we focus on the case with offline noisy data and online noise-free state measurements. Secondly, we extend the framework to explicitly account for process disturbances in both the offline data and the online state measurements. In each case, the data-driven min–max MPC problem with state constraints is reformulated as an SDP that yields a state-feedback control law.
	\item[c2)] For the case with noise-free online measurements, we show that the resulting controller exponentially stabilizes the system to the origin and satisfies the state constraint. In the presence of noisy online measurements, we establish that the closed-loop system under the proposed scheme is robustly stable and ensures the state constraint.
\end{itemize}

The remainder of the paper is organized as follows. Section \ref{sec.prosta} describes the system model and presents a data-driven characterization of the unknown dynamics. Section \ref{sec.measure} develops a data-driven min–max MPC scheme for the case with noise-free online state measurements and establishes closed-loop guarantees, including recursive feasibility, state constraint satisfaction, and exponential stability. Section \ref{sec.noisy} extends the framework to the case with noisy online measurements and establishes recursive feasibility, robust stability, and constraint satisfaction. Section \ref{sec.sim} presents simulation results, and Section \ref{sec.con} concludes the paper.

\emph{Notation:} Let $\mathbb{R}(\mathbb{R}_+)$ and $\mathbb{Z}$ denote the sets of (positive) real numbers and nonnegative integers, respectively. We write $\mathbb{Z}_{\left[ a,b \right]}$ for $\{a,a+1,\ldots, b\}$. Identity and zero matrices are $I$ and ${0}$ (whose dimensions are clear from context). For matrices $Q$, $Q \succ 0$ ($Q \succeq 0$) means that $Q$ is positive definite (semi-definite). For vectors $x$ and matrices $Q \succ 0$ of compatible dimensions, $\| x \|_Q = \sqrt{x^\top Q x}$. We use the shorthand $QR[*]^\top$ for $QRQ^\top$ when dimensions match. 

\section{System Representation}\label{sec.prosta}
We consider the discrete-time nonlinear system
\begin{equation}	\label{eq.system1}
	x_{k+1} = g(x_k) + B_su_k + d_k,\quad k=1,2,3,\cdots
\end{equation}
where
$x_k \in \mathbb{R}^{n_x}$ is the system state, $u_k \in \mathbb{R}^{n_u}$ is the control input, and $d_k \in \mathbb{R}^{n_x}$ is an unknown process disturbance. The vector field $g(x_k) \in \mathbb{R}^{n_x}$ and the input matrix $B_s \in \mathbb{R}^{n_x \times n_u}$ are both unknown. Our goal is to design an online controller that stabilizes the nonlinear system despite the uncertainties.

Without any prior knowledge about the vector field $g(x_k)$, control design would be extremely challenging or even impossible. We therefore impose the following assumption.
\begin{assumption}[Unknown model]
	We assume there exists a known lifting function $Z: \mathbb{R}^{n_x} \to \mathbb{R}^{n_z}$ such that 
	\begin{equation}
		g(x_k) = A_s Z(x_k)
	\end{equation} 
	for some unknown constant matrix $A_s \in \mathbb{R}^{n_x \times n_z}$. \label{Ass.system}
\end{assumption}

In practice, one can take 
\begin{equation}\label{eq.z}
	Z(x_k)=\left[ \begin{matrix}
		x_k  \\
		Q(x_k)  \\
	\end{matrix} \right]	,
\end{equation}
where $Q(x)\in\mathbb{R}^{n_z-n_x}$ is a vector of known nonlinear basis functions. Under this assumption, the system dynamics can be written as
\begin{equation}	\label{eq.system}
	x_{k+1} = A_sZ(x_k) + B_su_k + d_k
\end{equation}
with unknown $(A_s,B_s)$. To restrict the class of admissible nonlinearities, we further impose the following quadratic constraint.
\begin{assumption}[Nonlinearities]
	For all states $x_k$ and their corresponding lifted vectors $Q(x_k)$ in system \eqref{eq.system},
	the inequality
	\begin{align}\label{eq.WSG}
		\begin{bmatrix}
			{x}_k \\
			Q({x}_k)
		\end{bmatrix}^\top
		\begin{bmatrix}
			-W & S \\
			S^\top & G
		\end{bmatrix}
		\begin{bmatrix}
			{x}_k \\
			Q({x}_k)
		\end{bmatrix}
		\le 0
	\end{align}
	holds, where $W = W^\top \succeq 0$, $G = G^\top \succeq 0$, and $S$ are known matrices. \label{Ass.non}
\end{assumption}
\begin{assumption}[Bounded process disturbance]
	The disturbance $d_k \in  \mathbb{R}^{n_x}$ is unknown but bounded, i.e., there exists a known constant ${\varpi} \ge 0$ such that $\| d_k\|_2^2\le \varpi^2$ for all $k$.
	\label{Ass.noise}
\end{assumption}

\begin{remark}
Assumption~\ref{Ass.system} implies that the nonlinear dynamics lie in the span of the chosen basis functions, whose selection can be informed by prior knowledge of the system physics. Under this viewpoint, this assumption corresponds to a grey-box modeling framework, wherein the functional structure of the dynamics is known while the associated parameters are unknown. Such an assumption is well motivated when a sufficiently expressive function dictionary is available, which is often the case for mechanical and electrical systems. In these settings, the governing equations can typically be derived from first principles, whereas the exact parameter values are difficult to identify a priori. Similar assumptions have been used in recent data-driven control works
	\cite{cancellation,HuZJ}. 
	Furthermore, Assumption \ref{Ass.system} considers a lifting function that encompasses both linear and nonlinear components. The special case where $Z(x_k) = x_k$ recovers the standard formulation for linear systems, which has been extensively studied in the literature. In contrast, choosing $Z(x_k) = Q(x_k)$ corresponds to a purely nonlinear representation and leads to a simplified results as discussed in \cite{cancellation}. This purely nonlinear case will be further explained in Remark \ref{re.xq}.
\end{remark}
\begin{remark}
	In Assumption~\ref{Ass.non}, the quadratic constraint in \eqref{eq.WSG} characterizes a broad class of nonlinearities of practical relevance, including Lipschitz or norm-bounded nonlinearities, sector-bounded nonlinearities, strongly convex functions with Lipschitz continuous gradients, and certain classes of recurrent neural networks \cite{nqc}. The matrices $W$, $S$, and $G$ encode prior structural information about the nonlinear mapping and are selected based on available analytical knowledge or known bounds. For instance, in many mechanical and electrical systems, energy balance relations or passivity arguments naturally give rise to quadratic inequalities of the form \eqref{eq.WSG}, from which  $W$, $S$, and $G$ can be constructed explicitly. In the case of Lipschitz nonlinearities, one may select $W = \ell^{2} I$, $S = 0$, and $G = -I$, where $\ell$ denotes an upper bound on the Lipschitz constant. For nonlinearities with bounded partial derivatives, the matrices $W$, $S$, and $G$ can be derived directly from the corresponding derivative bounds \cite{nqc}. Such quadratic constraints are commonly used tools in data-driven nonlinear control frameworks; see, e.g., \cite{HuZJ,nqc}. Furthermore, alternative classes of nonlinearities beyond those represented by inequality \eqref{eq.WSG} have also been investigated; see, e.g., \cite{HuZJ}.	 
\end{remark}
\begin{remark}
In Assumption \ref{Ass.noise}, the process disturbance  is assumed to be unknown but bounded by known upper limits. The bound $\varpi$ can be obtained using methods such as sensor noise equivalent standard deviation, variance- or quantile-based analyses \cite{DEA}, and historical experience.
\end{remark}

To handle the unknown matrices $A_s$ and $B_s$, we perform offline experiments to collect noisy input–state data. Denote the collected inputs and states by
\begin{align}\label{eq.XU}
	\bar U &:= \begin{bmatrix}
		\bar u_0 & \cdots & \bar u_k & \cdots & \bar u_{N-1}
	\end{bmatrix} \in \mathbb{R}^{n_u \times N}, \nonumber \\
	\bar X &:= \begin{bmatrix}
		\bar x_0 & \cdots &\bar x_k & \cdots & \bar x_{N}
	\end{bmatrix} \in \mathbb{R}^{n_x \times (N+1)},
\end{align}
where $\bar x_k \in  \mathbb{R}^{n_x}$ and $\bar u_k \in  \mathbb{R}^{n_u}$  are the state and input generated from \eqref{eq.system} with an arbitrary initial state $\bar{x}_0$. 
Since the lifting function $Z$ is known, we could compute $Z(x_k)$ from $x_k$ and the function $Q(x_k)$ for $k\in \mathbb Z_{[0, N]}$. The corresponding data matrix can be constructed as
\begin{align}\label{eq.Z}
	\bar Z  :=\left[ \begin{matrix}
		\bar x_0 \!& \cdots \!& \bar x_k \!& \cdots \!& \bar x_{N-1}  \\
		Q(\bar x_0) \!& \cdots \!& Q(\bar x_k) \!& \cdots \!& Q(\bar x_{N-1})  \\
	\end{matrix} \right]   \in {\mathbb R}^{{n_z}  \times   N}.
\end{align}
Furthermore, let us denote the sequence of unknown disturbance
\begin{align}\label{eq.D}
	\bar D &:= \begin{bmatrix}
		\bar d_0 & \cdots & \bar d_k & \cdots & \bar d_{N-1}
	\end{bmatrix} \in \mathbb{R}^{n_x \times N}.
\end{align}

By the system equation \eqref{eq.system}, each data point satisfies
\begin{equation*} 
	\bar d_k = \bar x_{k+1} - A_s  Z(\bar x_k) - B_s \bar u_k
\end{equation*}
and by Assumption \ref{Ass.noise},  $\| \bar d_k \|_2^2 \le {\varpi}^2$ for all $k \in \mathbb{Z}_{[0,N]}$. This implies that any pair of matrices $(A,B) \in \Xi_k$  consistent with the data at time $k$ satisfy a certain quadratic matrix inequality. Defining $\mathcal {\bar Q}_k =\left[ \begin{matrix}
	I   &  \bar x_{k+1}  \\
	0   &  -  Z(\bar x_k)  \\
	0 &   -\bar u_k  \\
\end{matrix} \right] \! 	
\left[ \begin{matrix}
	\varpi^2 I   &  0  \\
	0   &    -I  \\
\end{matrix} \right]\left[ \begin{matrix}
I   &  \bar x_{k+1}  \\
0   &  -  Z(\bar x_k)  \\
0  &   -\bar u_k  \\
\end{matrix} \right]^\top$, the following set characterizes all consistent system matrices
\begin{equation}\label{eq.Qk}
\Xi_k  = \left\{ \begin{aligned}
	&	( A,B) :  
	\left[ \begin{matrix}
		I & A & B \\
	\end{matrix} \right]\mathcal {\bar Q}_k \left[ \begin{matrix}
		I & A & B \\
	\end{matrix} \right]^\top \succeq 0  \\
\end{aligned} \right\}
\end{equation}
for $k=0,1, \cdots, N-1$, which must contain the true $(A_s,B_s)$. Here, the set
$\Xi_k$ is characterized using the data $\bar x_{k+1}$, $Z(\bar x_k)$, and $\bar u_k$. Intersecting over all collected samples from $k=0$ to $k=N-1$ yields the  intersection of all consistency sets $\Xi_k$
\begin{equation}\label{eq.Qkk}
	\mathcal{S} := \bigcap_{k=0}^{N-1} \Xi_k.
\end{equation}
The set $\mathcal S$, characterized by inequality \eqref{eq.Qk}, comprises all system matrices that are consistent with the collected offline input–state data and therefore contains the true system, i.e., $(A_s,B_s)\in\mathcal{S}$. 
Moreover,the resulting set $\mathcal S$ may be smaller than each individual consistency set $\Xi_k$. 
The unknown system matrices are characterized through this representation, which is constructed from noisy data following the same line of reasoning as in \cite{XieAuto,noisydata,HuCDC}. In contrast to \cite{XieAuto,noisydata,HuCDC}, however, the representation \eqref{eq.Qk} exploits not only the collected input–state data but also the computed value $Z(\bar{x}_k)$.

The objective is to stabilize the origin of the unknown system \eqref{eq.system} subject to prescribed state constraint. To formalize the control objective, we consider the following quadratic stage cost
\begin{equation}
	l(x, u)= \| x \|_E^2+\| u \|_R^2,
\end{equation}
where preselected weighting matrices $E \succ 0$ and $R \succ 0$. While we focus on stabilization at the origin for notational simplicity, the proposed framework can be extended to non-zero equilibria. Ellipsoidal constraint on predicted states is imposed as 
\begin{equation}\label{eq.xsxx}
	\left\| x_k \right\|_{S_x}^2 \le 1
\end{equation}
with  $S_x\succ0$. Furthermore, building on the data-based system representation, we formulate a data-driven min–max MPC problem with infinite horizon:
\begin{subequations}\label{cost_uio1}
	\begin{align}
		\underset{\tilde u_k}{\mathop {\min}}
		\underset{(A, B)\in \mathcal{S}}{\mathop {\max}}&~~
		\sum\limits_{t=0}^{\infty}l(\tilde x_k(t), \tilde u_k(t) ) \label{con1_uio1} \\
		{\rm{s.t.}} ~~&~  ~\tilde x_{k}(t+1)=AZ(\tilde x_k(t))+B \tilde u_k(t), \label{con2_uio1} \\ 
		& ~~  \tilde x_k(0)= x_k, \label{con3_uio1} \\
		&~  \left\| \tilde x_k(t) \right\|_{S_x}^2 \le 1 \ \ \forall (A, B) \in \mathcal{S}, \ t \in \mathbb{Z}. \label{con4_uio1} 
	\end{align}
\end{subequations}
Here, $\tilde{x}_k(t)$ and $\tilde{u}_k(t)$ denote the predicted state and control input at time $k + t$, respectively. The control objective is to minimize the worst-case accumulated cost over all system matrices $(A, B) \in \mathcal{S}$, by selecting the input sequence $\tilde u_k(t)$ with $t=0$ to $t=\infty$. The dynamics use the nominal form with $(A,B)\in\mathcal S$, and the vector $Z(\tilde x_k(t))$ can be computed from the predicted state $\tilde x_k(t)$ and the known nonlinear function $Q(\tilde x_k(t))$. The initial state is set to the measured state $x_k$ at time $k$ in constraint \eqref{con3_uio1}. Moreover, the predicted state is required to satisfy the ellipsoidal state constraint in \eqref{con4_uio1} for all $t \in \mathbb{Z}$, uniformly over all system matrices $(A, B) \in \mathcal{S}$.

Directly solving this min–max problem is intractable. Inspired by classical min–max MPC \cite{Bemporadmm2003,MPCLMI1996,Mayne1998}, we restrict our attention to a time-varying state-feedback control law of the form 
\begin{equation}
	\tilde u_k(t)=K_kZ(\tilde x_k(t))
\end{equation}
with a time-varying gain $K_k\in\mathbb{R}^{n_u\times n_z}$ to be designed. This structure retains the nonlinear feature $Z(\cdot)$ in the feedback while allowing a tractable formulation. Note that prior data-driven min–max MPC methods for linear systems \cite{XieAuto,YifanX} typically use state-feedback $\tilde u_k(t)=K_k\tilde x_k(t)$, but our approach accounts for the nonlinear lifting. We now formalize the control synthesis problem.
\begin{problem}\label{prob.1}
	Consider the nonlinear system \eqref{eq.system} under Assumptions \ref{Ass.system}–\ref{Ass.noise}. Given the offline input-state measurements \eqref{eq.XU}, design a data-driven min–max MPC scheme that synthesizes a state-feedback control law ensuring that the closed-loop system satisfies the state constraint.
\end{problem}

\section{Data-driven Min-max MPC}\label{sec.measure}
As discussed in Section \ref{sec.prosta}, the offline data $\bar U$, $\bar X$ and $\bar Z$ subject to disturbances satisfying Assumption \ref{Ass.noise}, is employed to construct a data-driven set $\mathcal{S}$ that characterizes all system matrices consistent with the behavior of the system \eqref{eq.system}. Leveraging this parametrization, we formulate a data-driven min-max MPC problem with state constraints \eqref{cost_uio1}. 

To solve problem \ref{prob.1}, we consider a simplified setting wherein the state data measured online during closed-loop operation is assumed to be noise-free. The objective is to design a data-driven SDP for the unknown system 
$$x_{k+1} = A_s Z(x_k) + B_sK_kZ(x_k)$$
to stabilize the origin. At each time step, a state-feedback control law is computed through a receding horizon approach by solving the corresponding SDP. We further establish that the proposed SDP guarantees recursive feasibility and ensures exponential stability of the closed-loop system. 

\subsection{Data-driven Min-max MPC in the Noise-free Case}\label{sec.3.2}
Our objective is to derive an upper bound on the worst-case cost over all system matrices in the set $\mathcal{S}$, and subsequently synthesize a state-feedback control law that minimizes this bound. To this end, we consider a quadratic Lyapunov function $V(\tilde x_k(t)) = \left\| \tilde x_k(t) \right\|_{P_k}^2$ with $P_k \succ 0$ for any $ t  \in {\mathbb Z}$. 
Suppose that $V(\tilde x_k(t))$ satisfies the following inequality
\begin{align}
	V(\tilde x_k(t+1)) \!- \! V(\tilde x_k(t)) \! \le \! -\left\| \tilde u_k(t) \right\|_{R}^2 \! - \! \left\| \tilde x_k(t) \right\|_{E}^2 \label{eq.xt+1-t}
\end{align}
for all $\tilde{x}_k(t)$,$\tilde u_k(t)$ and $Z(\tilde{x}_k(t))$ predicted by \eqref{con2_uio1} with any $(A, B) \in \mathcal{S}$ that additionally satisfy the nonlinearity constraints in Assumption \ref{Ass.non}.
Following the same arguments in \cite{YifanX}, if \eqref{eq.xt+1-t} holds, then $V(x_k)$ serves as an upper bound on the  cost in \eqref{cost_uio1}, i.e.,
\begin{equation}\label{eq.v1}
	\underset{(A, B)\in {\mathcal{S}}}{\mathop {\max}}
	\sum\limits_{t=0}^{\infty}l(\tilde x_t(k), K(k)\tilde x_t(k) ) \! \le \! V(\tilde x_k(0)) \! = \!  V(x_k).	
\end{equation}
To ensure the existence of such an upper bound, the measured state $x_k$ must satisfy
\begin{equation}
	\left\| x_k \right\|_{P_k}^2 \le \gamma_k, \label{eq.xpxr}
\end{equation}
where $\gamma_k$ is a time-varying constant that upper bounds the optimal cost of \eqref{cost_uio1} at time $k$.
\begin{remark}
In inequality \eqref{eq.xpxr}, we introduce an upper bound $\gamma_k$ on the worst-case cost. This bound provides a theoretical guarantee that the objective function does not exceed a specified limit. While this bound is generally conservative, it remains analytically tractable and preserves robustness. Nevertheless, because the upper bound is not always tight, some conservatism is inevitably introduced in the control design.
\end{remark}
As discussed above \eqref{eq.xt+1-t}--\eqref{eq.xpxr}, we leverage the Lyapunov decrease condition \eqref{eq.xt+1-t} to derive an upper bound on the worst-case cost of the min–max control problem \eqref{cost_uio1}. In this setting, problem \eqref{cost_uio1} reduces to finding a state-feedback gain $K_k$ and a positive definite matrix $P_k \succ 0$ such that inequality \eqref{eq.zABK} holds for all $\tilde{x}_k(t)$ and $Q(\tilde{x}_k(t))$ satisfying the nonlinearity constraint in Assumption \ref{Ass.non}.
This nonlinearity constraint can be expressed as
\begin{align}\label{eq.tildeWSG}
	\begin{bmatrix}
		\tilde x_k(t) \\
		Q(\tilde x_k(t))
	\end{bmatrix}^{\top}
	\begin{bmatrix}
		- W & S \\
		S^\top & G
	\end{bmatrix}
	\begin{bmatrix}
		\tilde x_k(t) \\
		Q(\tilde x_k(t))
	\end{bmatrix}
	\le 0 .
\end{align}
Under this formulation, the following result holds. Solving the resulting data-driven SDP \eqref{sdp} in Theorem~\ref{The.3} yields a stabilizing state-feedback control law that minimizes an upper bound on the worst-case cost of the min–max control problem \eqref{cost_uio1}.

\begin{theorem}\label{The.3} $\mathbf{(SDP \ for  \ data}$-$\mathbf{driven \ min}$–$\mathbf{max \ MPC  }$ 
	$\mathbf{with \ noise}$-$\mathbf{free \ online  \ measurements)}$
	Consider the nonlinear system \eqref{eq.system} subject to Assumptions~\ref{Ass.system}–\ref{Ass.noise}. 
	Consider the decision variables
	$\gamma_k > 0$, $\lambda \in \mathbb{R}^{N}_+$, $P_k\succ0$, $K_k$, and nonnegative scalar 
	$\tau_k \ge 0$. Define auxiliary matrices $O_{k} = \begin{bmatrix} H_k & 0 \\ 0 & \gamma_kI \end{bmatrix} =\begin{bmatrix} \gamma_k P_k^{-1} & 0 \\ 0 & \gamma_k I \end{bmatrix} \in \mathbb R^{{n_z} \times {n_z}} $, $L_{k} = K_kO_{k}\in \mathbb R^{{n_u} \times {n_z}}$, $Y_{k} = \tau_k H_k$ and $\mu_{k} =\tau_k \gamma_k$.
	If there exists a feasible solution to the following SDP \eqref{sdp}, where $M_E^\top M_E =E$, $M_R^\top M_R =R$, $M_W^\top M_W =W$, and $\Pi(\lambda) =\sum\limits_{k=0}^{N-1}\lambda_k \mathcal {\bar Q}_k =\sum\limits_{k=0}^{N-1}\lambda_k
	\left[ \begin{matrix}
		I    &   \bar x_{k+1}  \\
		0  &  -Z(\bar x_k)  \\
		0  &   - \bar u_k  \\
	\end{matrix} \right] \left[ \begin{matrix}
		\varpi^2I &   0  \\
		0 &    -I  \\
	\end{matrix} \right] \left[ * \right]^\top$ \\
then i) the Lyapunov decrease condition \eqref{eq.xt+1-t} is satisfied, and, ii)
	the state-feedback control law $u_k=K_kZ(x_k)$ guarantees the optimal cost of \eqref{cost_uio1} to be upper bounded by $\gamma_k$.
	\begin{figure*}[!h]
		\normalsize
		\begin{subequations}\label{sdp}
			\begin{align}
				& \underset{\gamma_k, \mu_{k},\lambda \atop L_{k}, H_{k},  Y_{k}  }{\mathop{\text{minimize}}} \  \gamma_k \label{eq.con_1}\\
				& \quad \quad {\rm {s.t.}}   \left[ \begin{matrix}
					1  &  x_k^\top \\	x_k &  H_k  \\
				\end{matrix} \right] \succeq 0,  \quad \quad \left[ \begin{matrix}
					H_k  & H_k \\	H_k  & S_x^{-1}  \\
				\end{matrix} \right] \succeq 0, \label{eq.con_2}\\
				&\quad \quad \quad ~ \left[ \begin{matrix}
					\left[ \begin{matrix}
						-H_k   &  0  \\	0 & 0  \\ \end{matrix} \right]  +    \Pi(\lambda) &    \left[ \begin{matrix}
						0	\\ O_{k} \\ L_{k}
					\end{matrix} \right]   &  0  & 0 \\ \\
					\left[ \begin{matrix}
						0	 & O_{k}  &  L_{k}^\top\\
					\end{matrix} \right]   &   	\left[ \begin{matrix}
						- H_k  & -  Y_{k}S\\
						-  S^\top Y_{k}  & -\mu_{k} G \end{matrix} \right]  &  
					\left[ \begin{matrix}
						L_{k}^\top M_R^\top 	 &   \left[ \begin{matrix}
							H_kM_E^\top	 \\  0 
						\end{matrix} \right] \\
					\end{matrix} \right]
					& \left[ \begin{matrix}
						Y_{k}M_W^\top \\ 0
					\end{matrix} \right] \\ \\
					0   &  \left[ \begin{matrix}
						M_RL_{k} \\ \left[ \begin{matrix}
							M_EH_{k}  & 0 \\
						\end{matrix} \right]
					\end{matrix} \right]   &  -\gamma_k I & 0  \\ \\
					0  &  \left[ \begin{matrix}
						M_WY_{k}	 &  0 \\
					\end{matrix} \right] & 0 & -\mu_{k} I
				\end{matrix} \right] \preceq  0,  \label{eq.LMIdelta} \\
				&\quad  \quad \quad ~~  \gamma_k > 0, \quad \mu_{k}\ge 0,\quad  \lambda =( \begin{matrix}
					\lambda_0, & \cdots , & \lambda_{N-1}  \\
				\end{matrix} )\ge 0. \label{eq.con_mu} 
			\end{align}	
		\end{subequations}
		\hrulefill
		\vspace*{4pt}
	\end{figure*}
\end{theorem}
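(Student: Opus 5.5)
The plan is to show that \eqref{eq.LMIdelta}, after a congruence transformation and Schur complements, is a sufficient condition (via the full-block S-procedure) for the quadratic inequality \eqref{eq.xt+1-t}. Claim ii) then follows from the telescoping argument behind \eqref{eq.v1} combined with \eqref{eq.con_2}.

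\textbf{Step 1: the condition to certify.} Write $z=Z(\tilde x_k(t))=[\tilde x^\top,\,Q(\tilde x)^\top]^\top$ and $u=K_kz$. Then \eqref{eq.xt+1-t} reads
\[
z^\top\Big( (A+BK_k)^\top P_k(A+BK_k) - \mathrm{diag}(P_k,0) + \mathrm{diag}(E,0) + K_k^\top R K_k \Big) z \le 0 .
\]
This must hold for every $(A,B)\in\mathcal S$ and every $z$ satisfying \eqref{eq.tildeWSG}. By the S-lemma, it suffices to find $\tau_k\ge0$ such that
\[
\Phi:=(A+BK_k)^\top P_k(A+BK_k)-\mathrm{diag}(P_k-E,0)+K_k^\top RK_k+\tau_k\begin{bmatrix}-W&S\\S^\top&G\end{bmatrix}\preceq0
\]
for all $(A,B)\in\mathcal S$. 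Because of the $-W$ block in the multiplier, the term $\tau_kW$ enters with a negative sign. This is why the SDP carries the Schur block with $M_W$ and $-\mu_k I$; the sign convention has to be tracked carefully here.

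\textbf{Step 2: scaling to LMI variables.} Multiply $\Phi$ on both sides by $O_k$ and divide by $\gamma_k$. Using $H_k=\gamma_kP_k^{-1}$, $L_k=K_kO_k$, $Y_k=\tau_kH_k$, $\mu_k=\tau_k\gamma_k$, the terms become:
\begin{itemize}
\item $O_k\,\mathrm{diag}(P_k,0)\,O_k/\gamma_k=\mathrm{diag}(H_k,0)$;
\item the $E$ and $R$ terms become $\gamma_k^{-1}[H_kM_E^\top;0][\cdot]^\top$ and $\gamma_k^{-1}L_k^\top M_RM_RL_k$;
\item the multiplier block becomes $\begin{bmatrix}-\tau_kH_kWH_k/\gamma_k & Y_kS\\ S^\top Y_k&\mu_kG\end{bmatrix}$, using $\tau_k\gamma_k\cdot\gamma_k^{-1}=\tau_k$ on the off-diagonal and $\tau_k\gamma_k^2/\gamma_k=\mu_k$ on the $G$ entry;
\item the $W$ entry rewrites as $-\mu_k^{-1}Y_kM_W^\top M_WY_k$.
\end{itemize}
Schur complements with respect to $-\gamma_kI$ and $-\mu_kI$ then produce the last two block rows and columns of \eqref{eq.LMIdelta}. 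The sign of the $W$ term needs checking against the SDP: as written, its Schur block appears to add $+\mu_k^{-1}Y_kWY_k$ to the lower-right block, which is conservative relative to the S-procedure term. I would verify this and, if needed, note that it only strengthens the condition. The resulting middle block is $-\mathrm{diag}(H_k,0)-(\text{multiplier})+\dots$, matching the central block of \eqref{eq.LMIdelta}.

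\textbf{Step 3: robustness over $\mathcal S$ (main obstacle).} What remains is the term $O_k(A+BK_k)^\top P_k(A+BK_k)O_k/\gamma_k=(AO_k+BL_k)^\top H_k^{-1}(AO_k+BL_k)$, which must be bounded for all $(A,B)\in\mathcal S$. I would use the dualization / full-block S-lemma argument of \cite{XieAuto,YifanX}. Suppose the $2\times2$ upper-left structure of \eqref{eq.LMIdelta}, namely $\begin{bmatrix}\mathrm{diag}(-H_k,0)+\Pi(\lambda) & [0;O_k;L_k]\\ \star & \cdots\end{bmatrix}\preceq0$, holds. Then pre- and post-multiplying by $[I\;A\;B]$-type matrices, together with the data inequality $[I\,A\,B]\Pi(\lambda)[I\,A\,B]^\top\succeq0$ (valid since $\lambda\ge0$ and $(A,B)\in\Xi_k$ for all $k$), gives
\[
-H_k+(AO_k+BL_k)(\text{middle block})^{-1}(AO_k+BL_k)^\top\preceq0 .
\]
A further Schur complement converts this into the required $\Phi$-type inequality with $H_k^{-1}$ on the outside. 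The delicate points are the invertibility and definiteness of the middle block, and the fact that the zero block in $\mathrm{diag}(-H_k,0)$ makes a direct Schur argument degenerate. I would handle the latter with a perturbation or limiting argument.

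\textbf{Step 4: conclusions.} With \eqref{eq.xt+1-t} established, summing from $t=0$ to $\infty$ gives the bound \eqref{eq.v1}, since $V\ge0$ and the decrease is strict under $E\succ0$. The first LMI in \eqref{eq.con_2}, by a Schur complement, gives $x_k^\top H_k^{-1}x_k\le1$, i.e. $\|x_k\|_{P_k}^2\le\gamma_k$. This is \eqref{eq.xpxr} and yields claim ii). The second LMI in \eqref{eq.con_2} gives $H_k\preceq S_x^{-1}$, so the invariant ellipsoid $\{x:\|x\|_{P_k}^2\le\gamma_k\}$ lies inside the constraint set \eqref{eq.xsxx}.
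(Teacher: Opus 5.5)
Your overall route is the paper's: an S-procedure on the nonlinearity constraint, congruence with $O_k$ and scaling by $\gamma_k^{-1}$, Schur complements for the $E$, $R$, $W$ terms, dualization plus the lossy S-procedure over $\mathcal S$, and finally the Schur complement of the first LMI in \eqref{eq.con_2}. However, Step 1 puts the S-procedure multiplier in with the wrong sign, and this breaks the chain. Let $M:=(A+BK_k)^\top P_k(A+BK_k)-\mathrm{diag}(P_k-E,0)+K_k^\top RK_k$ and $N:=\begin{bmatrix}-W&S\\S^\top&G\end{bmatrix}$, so the constraint set is $\{z: z^\top Nz\le 0\}$. Your condition $M+\tau_kN\preceq0$ only yields $z^\top Mz\le-\tau_k z^\top Nz$, and the right-hand side is \emph{nonnegative} on that set, so nothing follows. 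The correct certificate is $M-\tau_kN\preceq0$, i.e.\ $M+\tau_k\begin{bmatrix}W&-S\\-S^\top&-G\end{bmatrix}\preceq0$, which is \eqref{eq.A+BK}.

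The error propagates into Step 2. There you switch to ``$-\mathrm{diag}(H_k,0)-(\text{multiplier})$'' to match \eqref{eq.LMIdelta}, which contradicts your own $\Phi$. Your claim that the SDP ``only strengthens'' your condition is also false: besides the $W$ entry, the $S$ and $G$ entries flip too, and $-\mu_kG$ in the SDP is \emph{weaker} than your $+\mu_kG$ because $G\succeq0$. So as written, the SDP does not imply your $\Phi\preceq0$, and your $\Phi\preceq0$ does not imply \eqref{eq.xt+1-t}. Moreover, a term $-\mu_k^{-1}Y_kWY_k$ could not come from a $-\mu_kI$ Schur block in a $\preceq0$ LMI at all, since such a block always adds a positive semidefinite term.

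Once the sign is corrected, the congruence gives exactly $\tfrac{\tau_k}{\gamma_k}H_kWH_k=\mu_k^{-1}Y_kM_W^\top M_WY_k$ (produced by the $-\mu_kI$ block), together with $-Y_kS$ and $-\mu_kG$. This is the middle block of \eqref{eq.LMIdelta} with no conservatism, so there is nothing to verify or relax.

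Step 3 has a smaller slip of the same kind. The inequality you should obtain is $-H_k-(AO_k+BL_k)\xi_k^{-1}(AO_k+BL_k)^\top\preceq0$, where $\xi_k$ is the middle block. With your ``$+$'' it holds trivially when $\xi_k\prec0$ and certifies nothing.

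You can also drop the invertibility and perturbation worries. After the Schur complements on the last two block rows of \eqref{eq.LMIdelta}:
\begin{itemize}
\item apply the congruence $\mathrm{diag}([I\;A\;B],I)$;
\item discard $[I\;A\;B]\Pi(\lambda)[I\;A\;B]^\top\succeq0$, valid for $(A,B)\in\mathcal S$ and $\lambda\ge0$;
\item take the Schur complement with respect to $-H_k\prec0$.
\end{itemize}
This gives $\xi_k+(AO_k+BL_k)^\top H_k^{-1}(AO_k+BL_k)\preceq0$ directly, without ever inverting $\xi_k$ or the zero block of $\mathrm{diag}(-H_k,0)$. The paper instead takes the Schur complement with respect to $\xi_k$, implicitly assuming $\xi_k\prec0$.
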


\begin{pf}
	As established in \eqref{eq.v1}–\eqref{eq.xpxr}, the optimal cost for \eqref{cost_uio1} is guaranteed to be at most upper bounded by $\left\| x_k \right\|_{P_k}^2$, which in turn is constrained by the scalar bound $\gamma_k$.  By introducing the matrix variable $H_k$ and applying the Schur complement, the inequality in \eqref{eq.xpxr} can be equivalently reformulated as the first LMI in \eqref{eq.con_2}.  The state constraint \eqref{con4_uio1} can be reformulated as the second LMI in \eqref{eq.con_2}, following a derivation analogous to that in \cite{YifanX}. 
	
	Furthermore, we derive a set of LMIs that ensure the inequality \eqref{eq.xt+1-t} holds. Specifically, by substituting \eqref{con2_uio1} into \eqref{eq.xt+1-t}, the inequality can be equivalently expressed as
	\begin{align}\label{eq.zABK}
		&	\left[ \begin{matrix}
			\tilde x_k(t)  \\
			Q(\tilde x_k(t) )  \\
		\end{matrix} \right]^{\top}\big((A+BK_k)^{\top}P_k(A+BK_k)-\left[ \begin{matrix}
			P_k ~&~ 0  \\
			0 ~&~ 0  \\
		\end{matrix} \right] \nonumber \\
		+&\left[ \begin{matrix}
			E ~&~ 0  \\
			0 ~&~ 0  \\
		\end{matrix} \right]+K_k^{\top}RK_k \big)\left[ \begin{matrix}
			\tilde x_k(t)  \\
			Q(\tilde x_k(t))  \\
		\end{matrix} \right] \le 0.	
	\end{align}
Furthermore, we require \eqref{eq.zABK} to hold for all $\tilde{x}_k(t)$ and 
$Q(\tilde{x}_k(t))$ predicted by \eqref{con2_uio1} that satisfy the nonlinear 
constraint \eqref{eq.tildeWSG}.  
Invoking the results in \cite[the proof of Theorem~1]{nqc} and 
\cite[Section~2.1.2]{nonVAY}, we obtain a Lyapunov decrease condition to which 
the lossless S-procedure applies. In particular, by \cite[Theorem~2.19]{nonVAY}, a necessary and sufficient condition for \eqref{eq.zABK} to hold for all such vectors satisfying \eqref{eq.tildeWSG} is that there exists a scalar $\tau_k \ge 0$ such that
\begin{align}
	& (A + BK_k)^\top P_k (A + BK_k)
	- \begin{bmatrix}
		P_k ~&~ 0 \\ 0 ~&~ 0
	\end{bmatrix}
	+ \begin{bmatrix}
		E ~&~ 0 \\ 0 ~&~ 0
	\end{bmatrix} \nonumber \\
+	&~ K_k^\top R K_k
	+ \tau_k
	\begin{bmatrix}
		W & -S \\ -S^\top & -G
	\end{bmatrix}
	\preceq 0.
	\label{eq.A+BK}
\end{align}
It should be noted that in \eqref{eq.A+BK}, the application of the lossless S-procedure yields a necessary and sufficient condition, and hence does not incur any additional conservatism.
	By pre- and post-multiplying inequality \eqref{eq.A+BK} by $\begin{bmatrix} P_k^{-1} & 0 \\ 0 &  I \end{bmatrix}$ and $\gamma_k $, respectively, the inequality \eqref{eq.A+BK} can be equivalently expressed as
	\begin{align} \label{eq.APA+H}
		& \left( A O_{k} + B L_{k} \right)^\top P_k \left( A O_{k} + B L_{k} \right)+ L_{k}^\top R L_{k} \nonumber  \\
		+&	  \begin{bmatrix}
			H_k (E+\tau_kW) H_k \!-\! \gamma_k H_k ~& -\gamma_k \tau_k H_kS\\
			-\gamma_k \tau_k S^\top H_k  ~& -\gamma_k^2\tau_kG
		\end{bmatrix}  \preceq 0,
	\end{align}
	where $O_k$ and $L_{k}$ are defined in Theorem \ref{The.3}.
	Multiplying \eqref{eq.APA+H} with $\frac{1}{\gamma_k}$ yields
	\begin{align}\label{eq.APA+H1}
		& \left( A O_{k} \!+\! B L_{k} \right)^\top \! H_k^{-1} \! \left( A O_{k} \!+\! B L_{k} \right) \!+ \! \frac{1}{\gamma_k} L_{k}^\top R L_{k}  \nonumber  \\
		+&	  \begin{bmatrix}
			\frac{1}{\gamma_k} H_k (E+\tau_kW) H_k - H_k ~&~ - \tau_k H_kS\\
			- \tau_k S^\top H_k  ~&~ -\gamma_k\tau_kG
		\end{bmatrix} \preceq 0.
	\end{align}
	Applying the Schur complement twice to \eqref{eq.APA+H1}, it follows that
	\begin{align}
		- H_k-\left( AO_{k} +  BL_{k} \right)\xi_k^{-1} \left( A{{O}_{k}}+BL_{k} \right)^{\top }  \preceq 0 \label{eq.A_sH11}
	\end{align}
	and $\xi_k^{-1} =  \left\{ {\begin{bmatrix}
			\frac{1}{\gamma_k} H_k (E \! +\!  \tau_kW) H_k  \!- \! H_k  & - \tau_k H_kS\\
			- \tau_k S^\top H_k  & -\gamma_k\tau_kG
	\end{bmatrix}} \right.$
	$\left. {+  \frac{1}{\gamma_k} L_{k}^\top R L_{k}} \right\}^{-1}\preceq 0.$ The inequality \eqref{eq.A_sH11} is equivalent to
	\begin{align}
		\left[ \begin{matrix}
			I   \\
			A^\top  \\
			B^\top  \\
		\end{matrix} \right]^\top \!\! \left[ \begin{matrix}
			H_k  &  0  \\
			0  &  	\left[ \begin{matrix}
				O_{k}  \\
				L_{k}  \\
			\end{matrix} \right]\xi_k^{-1}\left[ \begin{matrix}
				O_{k}  \\
				L_{k}  \\
			\end{matrix} \right]^\top  \\
		\end{matrix} \right] \!\! \left[ \begin{matrix}
			I   \\
			A^\top  \\
			B^\top   \\
		\end{matrix} \right] \succeq 0. \label{eq.augIAB}
	\end{align}
We require that inequality \eqref{eq.augIAB} hold simultaneously for all $(A,B) \in \Xi_k$ in \eqref{eq.Qk} with $k = 0,1,\cdots,N-1$, i.e., for all matrices $(A,B)$ in the intersection of the consistency sets $\Xi_k$, namely $(A,B) \in \mathcal S$.  
By applying the lossy S-procedure over the data-based set $\mathcal S$ \cite[Lemma~2]{noisydata}, a sufficient condition can be obtained if there exists a vector of nonnegative scalars
$\lambda_i \geq 0, \forall i=0, \ldots, N-1$ such that the following inequality holds 
\begin{align}
&	\left[ \begin{matrix}
		H_k & 0 \\
		0 & \begin{bmatrix} O_k \\ L_k \end{bmatrix} \xi_k^{-1} \begin{bmatrix} O_k \\ L_k \end{bmatrix}^\top
	\end{matrix} \right] - \lambda _0 \mathcal {\bar Q}_0 -   \cdots -  \lambda _{N-1} \mathcal {\bar Q}_{N-1}  \nonumber \\
 =& \left[ \begin{matrix}
 	H_k & 0 \\
 	0 & \begin{bmatrix} O_k \\ L_k \end{bmatrix} \xi_k^{-1} \begin{bmatrix} O_k \\ L_k \end{bmatrix}^\top
 \end{matrix} \right] - \sum\limits_{k=0}^{N-1}\lambda_k \mathcal {\bar Q}_k \succeq 0. \label{eq.xi}
\end{align}
Defining
$	\Pi(\lambda) $ as in Theorem \ref{The.3} and rewriting inequality \eqref{eq.xi} leads to
	\begin{align}\label{eq.xi1}
		\left[ \begin{matrix}
			-H_k  ~&~  0  \\
			0  ~&~  	0 \\
		\end{matrix} \right]+ \Pi(\lambda)- 	\left[ \begin{matrix}
			0   \\
			O_{k}  \\
			L_{k}  \\
		\end{matrix} \right]\xi_k^{-1}\left[ \begin{matrix}
			0   \\
			O_{k}  \\
			L_{k}  \\
		\end{matrix} \right]^\top \preceq 0.
	\end{align}
	Applying the Schur complement to \eqref{eq.xi1}, we obtain that
	\begin{equation}\label{eq.LMIxi}
		\left[ \begin{matrix}
			\left[ \begin{matrix}
				-H_k   ~&~  0  \\	0 & 0  \\ \end{matrix} \right]  +    \Pi(\lambda) ~&~    \left[ \begin{matrix}
				0	\\ O_{k} \\ L_{k}
			\end{matrix} \right]     \\
			\left[ \begin{matrix}
				0	 &~ O_{k}  &  L_{k}^\top\\
			\end{matrix} \right]   ~&~   	\xi_k
		\end{matrix} \right] \preceq  0.
	\end{equation}
	To facilitate further derivation, we reorganize the expression for $\xi_k={\begin{bmatrix}
			\frac{1}{\gamma_k} H_k (E +  \tau_kW) H_k  \!-\! H_k  &~ - \tau_k H_kS\\
			- \tau_k S^\top H_k  &~ -\gamma_k\tau_kG
	\end{bmatrix}} +  \frac{1}{\gamma_k} L_{k}^\top R L_{k} $. By introducing the notation $Y_{k}$ and $\mu_{k}$ as defined in Theorem \ref{The.3}, $\xi_k$ can be equivalently expressed as
	\begin{align}\label{eq.xik}
		\xi_k & =   {\begin{bmatrix}
				\frac{1}{\gamma_k} H_kEH_k +	\frac{1}{\mu_k} Y_{k} W  Y_{k} -H_k  & - Y_{k}S\\
				-  S^\top Y_{k}  & -\mu_k G
		\end{bmatrix}} \ \nonumber \\
		& \quad + \frac{1}{\gamma_k} L_{k}^\top R L_{k} \nonumber \\
		& = {\begin{bmatrix}
				- H_k  & - Y_{k}S\\
				-  S^\top  Y_{k}  & -\mu_k G
		\end{bmatrix}}+  \frac{1}{\gamma_k} L_{k}^\top R L_{k} \nonumber \\
		& \quad +\frac{1}{\gamma_k}{\begin{bmatrix}
				H_kEH_k  &~  0 \\
				0 &~ 0
		\end{bmatrix}}+\frac{1}{\mu_{k}}{\begin{bmatrix}
				Y_{k}W Y_{k}  &~  0 \\
				0 &~ 0
		\end{bmatrix}}.
	\end{align}
	Substituting \eqref{eq.xik} into \eqref{eq.LMIxi} and applying the Schur complement once more yields the LMI \eqref{eq.LMIdelta}.
	Taken together, this analysis confirms that under the conditions stated in Theorem \ref{The.3}, the optimal cost of problem \eqref{cost_uio1} is guaranteed to be upper bounded by $\gamma_k$.
\end{pf}

To control the unknown nonlinear system \eqref{eq.system} without disturbances, the SDP \eqref{sdp} is solved in a receding-horizon framework, wherein the optimal state-feedback gain is updated at each time step based on current measurements, as outlined in Algorithm 1. The optimal solution of \eqref{sdp}, denoted by $(\gamma^*_k, L^*_{k}, H^*_{k}, \lambda^*, Y^*_{k},\mu^*_{k})$, depends explicitly on the measured state $x_k$ and the associated feature vector $Z(x_k)$. The resulting control input is then computed using the state-feedback law $u_k = K^*_kZ(x_k)$, where $K^*_k$ is computed by $L^*_{k}\left(\begin{bmatrix} H_k^* & 0 \\ 0 & \gamma_k^*I \end{bmatrix}\right)^{-1}$.

{\centering
	\begin{tabular}{ll}
		\toprule
		\multirow{1}{*}{Algorithm 1: Data-driven min-max MPC.} \\
		\midrule
		\multirow{1}{*}{1. At time $k$, measure state $x_k$ and calculate vector }  \\ 
		{\quad $Z(x_k)$ using \eqref{eq.z}  } \\
		{2. Solve the problem \eqref{sdp} and obtain $K^*_k$} \\
		{3. Apply the input $u_k=K^*_k Z(x_k)$} \\
		{4. Set $k=k+1$ and go back to $1$} \\
		\bottomrule
	\end{tabular}\\}
\begin{remark}\label{re.xq}
	The SDP constructed in Theorem \ref{The.3} accounts for the case where the vector $Z(x_k)$ comprises both linear and nonlinear components. In the special case where $Z(x_k) = x_k$, the system \eqref{eq.system} reduces to a standard linear time-invariant system. 
	In this case, Assumption \ref{Ass.non} is no longer required, and the resulting min-max MPC problem reduces to \cite{YifanX}. In the opposite case where $Z(x_k) = Q(x_k)$ consists solely of nonlinear terms, the Lyapunov decrease condition, unlike the general inequality \eqref{eq.zABK}, explicitly contains no cross-terms between $\tilde{x}_k(t)$ and $Q(\tilde{x}_k(t))$. As a result, it suffices to impose the simplified version of the nonlinearity constraint in Assumption \ref{Ass.non}, i.e., the special case with $S = 0$, on all pairs $\tilde{x}_k(t)$ and $Q(\tilde{x}_k(t))$. Leveraging the S-procedure, a sufficient condition can then be derived to ensure that \eqref{eq.xt+1-t} holds for all such vectors. The resulting SDP can be formulated by following the same procedure as in Theorem \ref{The.3}, with the derivation simplified due to the absence of cross-terms.
\end{remark}
\begin{remark}\label{re.input}
	The optimization problem formulated in this paper considers only state constraint, without incorporating input constraint. Unlike the settings in \cite{XieAuto,YifanX}, input constraint is not considered here, primarily because deriving an LMI that guarantees $\| \tilde{u}_k(t) \|_{S_u}^2 \leq 1$ with $S_u \succ 0$ is substantially more challenging.
	Specifically, the approach in \cite{XieAuto,YifanX}, which employs the S-procedure to jointly enforce the inequalities $\| x_k \|_{P_k}^2 \leq \gamma_k$ and $\| \tilde{u}_k(t) \|_{S_u}^2 \leq 1$, cannot be directly applied in our setting. This difficulty arises because the state-feedback law considered here takes the form $\tilde{u}_k(t) = K_k Z(\tilde{x}_k(t))$, rather than the standard state-feedback structure $\tilde{u}_k(t) = K_k \tilde{x}_k(t)$. Extending the proposed framework to explicitly incorporate input constraints therefore remains an important direction for future research.
\end{remark}
\begin{remark}
If the chosen function dictionary is misspecified, it may capture only partial or inaccurate information about the system dynamics. This mismatch leads to an inaccurate evaluation of the lifted historical data $\bar Z$, which in turn compromises the construction of the data-driven set-membership representation \eqref{eq.Qkk}. Consequently, when Theorem \ref{The.3} is invoked, the resulting SDP may yield a control gain $K_k^{*}$ that fails to provide adequate closed-loop stabilization, potentially leading to degraded performance.
\end{remark}
\subsection{Closed-loop guarantees}\label{sec.3.3}
The following theorem establishes the recursive feasibility of the SDP problem \eqref{sdp} and demonstrates that the closed-loop system $x_{k+1} = A_s Z(x_k) + B_s K_k Z(x_k)$ is exponentially stabilized to the origin. Moreover, it is shown that the closed-loop state trajectory satisfies the prescribed state constraint.
\begin{theorem}\label{Th.guarantee}
	Let Assumptions \ref{Ass.system}--\ref{Ass.noise} hold. If the SDP problem \eqref{sdp} is feasible at initial time $k=0$, then 
	
	\begin{itemize}
		\item[i)] it remains feasible for all $k\ge0$;  
		\item[ii)] the closed-loop trajectory satisfies the state constraint, i.e., $\| x_k \|_{S_x}^2 \le 1$ ; and, 
		\item[iii)]  the closed-loop system $x_{k+1}=A_sZ(x_k)+B_sK_kZ(x_k)$ converges exponentially to the origin.
	\end{itemize}
	
\end{theorem}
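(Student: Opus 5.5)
We follow the standard min–max MPC argument of \cite{MPCLMI1996,YifanX}, adapted to the lifted feedback $u_k=K_kZ(x_k)$. For recursive feasibility (i), we show that the optimal solution at time $k$, $(\gamma_k^*,L_k^*,H_k^*,\lambda^*,Y_k^*,\mu_k^*)$, is again feasible at time $k+1$; at most the first LMI in \eqref{eq.con_2} needs checking. The constraints \eqref{eq.LMIdelta}, \eqref{eq.con_mu} and the second LMI in \eqref{eq.con_2} do not depend on the measured state, because $\Pi(\lambda)$ uses only offline data. Hence they remain satisfied by the same variables.

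For the first LMI we use part i) of Theorem~\ref{The.3}. The Lyapunov decrease \eqref{eq.xt+1-t} holds for every $(A,B)\in\mathcal S$ and every pair $(x,Q(x))$ satisfying Assumption~\ref{Ass.non}. Since the online measurements are noise-free, $x_{k+1}=A_sZ(x_k)+B_sK_k^*Z(x_k)$ with $(A_s,B_s)\in\mathcal S$, and $x_k$ itself satisfies \eqref{eq.WSG}. Therefore
\begin{equation*}
\|x_{k+1}\|_{P_k^*}^2\le \|x_k\|_{P_k^*}^2-\|x_k\|_E^2-\|u_k\|_R^2\le \gamma_k^*,
\end{equation*}
where the last step uses the first LMI at time $k$. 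By a Schur complement, with $H_k^*=\gamma_k^*P_k^{*-1}$, this is exactly $\begin{bmatrix}1 & x_{k+1}^\top\\ x_{k+1} & H_k^*\end{bmatrix}\succeq 0$. So the shifted candidate is feasible, and it also gives $\gamma_{k+1}^*\le\gamma_k^*$. Induction on $k$ then yields i).

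For constraint satisfaction (ii), combine the two LMIs in \eqref{eq.con_2} at each time $k$. The first gives $x_k^\top H_k^{-1}x_k\le 1$. The second, by a Schur complement, gives $H_k-H_kS_xH_k\succeq0$, i.e. $H_k^{1/2}S_xH_k^{1/2}\preceq I$. Hence $\|x_k\|_{S_x}^2\le \|x_k\|_{H_k^{-1}}^2\le 1$. Because feasibility holds for all $k$ by i), the bound holds along the entire closed-loop trajectory.

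For exponential stability (iii), take $V_k:=\|x_k\|_{P_k^*}^2=\gamma_k^*\|x_k\|_{H_k^{*-1}}^2$ as a candidate Lyapunov function. The difficulty is that $P_k^*$ changes with $k$. We handle this with optimality: the candidate from step i) is feasible at $k+1$, and the optimal value upper bounds the cost. We claim $\gamma_{k+1}^*\le \|x_{k+1}\|_{P_k^*}^2$ in the sense of the upper bound on the worst-case cost. More concretely, we work with the optimal value $J_k^*:=\|x_k\|_{P_k^*}^2$ and use that minimizing $\gamma$ makes the first LMI active, $\gamma_k^*=\|x_k\|^2_{P^*_k}\cdot$(scaling). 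Then
\begin{equation*}
J_{k+1}^*\le \|x_{k+1}\|_{P_k^*}^2\le J_k^*-\|x_k\|_E^2\le \Bigl(1-\tfrac{\lambda_{\min}(E)}{\lambda_{\max}(P_k^*)}\Bigr)J_k^*.
\end{equation*}
A uniform contraction factor requires a bound $\lambda_{\max}(P_k^*)\le \bar p$. We obtain one from \eqref{eq.LMIdelta}: its $(2,2)$-block forces $H_k\succeq$ a multiple of $E^{-1}$-type terms, so $P_k^*\preceq \gamma_k^* H_k^{*-1}\preceq\gamma_0^*\cdot c\,I$, using $\gamma_k^*\le\gamma_0^*$. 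A lower bound $P_k^*\succeq E$ follows from the decrease inequality. This sandwich yields $\|x_k\|^2\le c_1\rho^k\|x_0\|^2$ with $\rho<1$.

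The main obstacle is making this monotonicity step rigorous with the scaling $H_k=\gamma_kP_k^{-1}$ and the time-varying $P_k$. We will first try the classical argument of \cite{MPCLMI1996}, namely that $\gamma_{k+1}^*\le \gamma_k^*-\|x_k\|_E^2$. This is obtained by rescaling the previous solution with a smaller $\gamma$ while keeping $P$ fixed, after checking that \eqref{eq.LMIdelta} is homogeneous under the joint scaling of $(\gamma,H,L,Y,\mu)$. We would then bound the norm of $x_k$ via $\|x_k\|_E^2\ge \lambda_{\min}(E)\gamma_k^*/\lambda_{\max}(P_k^*)$.
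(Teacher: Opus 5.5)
\paragraph{Parts (i) and (ii).} These are sound.

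In (i) you use the paper's argument: the time-$k$ optimizer stays feasible because $\|x_{k+1}\|_{P_k^*}^2\le\|x_k\|_{P_k^*}^2\le\gamma_k^*$. Your way of obtaining the decrease, by evaluating at the actual $Z(x_k)$, which satisfies \eqref{eq.WSG}, is cleaner than the paper's claim that $\begin{bmatrix}W&-S\\-S^\top&-G\end{bmatrix}\succeq0$. That claim cannot hold with $G\succeq0$ unless $G=0$ and $S=0$.

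In (ii), combining $S_x\preceq H_k^{-1}$ with $x_k^\top H_k^{-1}x_k\le1$ is a more direct substitute for the paper's RPI-set argument. It suffices for the closed-loop states.

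\paragraph{The gap in (iii).} Your uniform contraction factor needs $\lambda_{\max}(P_k^*)\le\bar p$, which you attribute to the $(2,2)$-block of \eqref{eq.LMIdelta}. That block, via $\xi_k\preceq0$, only gives $\frac{1}{\gamma_k}H_kEH_k\preceq H_k$, i.e.\ $H_k\preceq\gamma_kE^{-1}$, i.e.\ $P_k\succeq E$. This is an upper bound on $H_k$, which is the lower bound on $P_k$ you already use.

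Nothing in \eqref{sdp} bounds $H_k$ uniformly from below:
\begin{itemize}
\item the first LMI in \eqref{eq.con_2} gives only $H_k\succeq x_kx_k^\top$, which is rank one and vanishes as $x_k\to0$;
\item the second gives $P_k\succeq\gamma_kS_x$, again a lower bound on $P_k$;
\item unlike \eqref{eq.LMIdelta_no}, there is no $\frac{\gamma_k}{\alpha}I$ term forcing $P_k\preceq\alpha I$.
\end{itemize}
So $1-\underline{\lambda}(E)/\lambda_{\max}(P_k^*)$ may approach $1$. The same defect breaks your fallback estimate $\|x_k\|_E^2\ge\underline{\lambda}(E)\gamma_k^*/\lambda_{\max}(P_k^*)$.

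\paragraph{How to close it.} What you need is a uniform bound on the ratio $\gamma_k^*/\|x_k\|_2^2$, not on $P_k^*$. This is the role of the sandwich $\underline{\lambda}(E)\|x_k\|_2^2\le\|x_k\|_{P_k^*}^2\le\|x_k\|_{P_0^*}^2$ in the paper's proof, where it is stated with little justification beyond \eqref{eq.xx}.

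Your own homogeneity observation proves it, provided you also scale $\lambda$, since $\Pi(\lambda)$ sits beside $-H_k$ in \eqref{eq.LMIdelta}.
\begin{itemize}
\item If $\|x_k\|_{P_0^*}^2\le\gamma_0^*$, multiply the time-$0$ optimizer by $c=\|x_k\|_{P_0^*}^2/\gamma_0^*\le1$. The first LMI then holds with equality at $x_k$. The second survives because $cH_0^*S_xH_0^*\preceq H_0^*$. \eqref{eq.LMIdelta} is homogeneous of degree one, so it is preserved. Hence $\gamma_k^*\le\|x_k\|_{P_0^*}^2$.
\item Otherwise $\gamma_k^*\le\gamma_0^*<\|x_k\|_{P_0^*}^2$.
\end{itemize}
Either way $\gamma_k^*\le\lambda_{\max}(P_0^*)\|x_k\|_2^2$.

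Your rescaling argument with $c=\|x_{k+1}\|_{P_k^*}^2/\gamma_k^*$ gives $\gamma_{k+1}^*\le\|x_{k+1}\|_{P_k^*}^2\le\gamma_k^*-\underline{\lambda}(E)\|x_k\|_2^2$. Combining the two yields $\gamma_{k+1}^*\le\rho\,\gamma_k^*$ with $\rho=1-\underline{\lambda}(E)/\lambda_{\max}(P_0^*)\in[0,1)$. Finally, $P_k^*\succeq E$ gives $\|x_k\|_2^2\le\gamma_k^*/\underline{\lambda}(E)\le\rho^k\gamma_0^*/\underline{\lambda}(E)$, which is the exponential convergence claimed.
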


\begin{pf} 
	(i) Recursive feasibility
	
	Assume that the SDP \eqref{sdp} is feasible at time step $k$. Let $K_k^*$ and $P_k^*$ denote the optimal state-feedback gain and the corresponding Lyapunov matrix obtained from the solution to \eqref{sdp}. Substituting $K_k = K_k^*$ and $P_k = P_k^*$ into  \eqref{eq.A+BK}, it follows that for all $(A, B) \in \mathcal{S}$, the following inequality holds
	\begin{align}\label{eq.P*K*1}
		& (A+BK_k^*)^{\top }P_k^*(A+BK_k^*)-\left[ \begin{matrix}
			P_k^* ~&~ 0  \\
			0 ~&~ 0  \\
		\end{matrix} \right] \nonumber \\
		\preceq & -\left[ \begin{matrix}
			E ~&~ 0  \\
			0 ~&~ 0  \\
		\end{matrix} \right]-(K_k^*)^{\top }RK_k^* -\tau_k \begin{bmatrix}
			W & -S \\ -S^\top  & -G
		\end{bmatrix}.	
	\end{align}
	Since constraint \eqref{eq.WSG} holds for all $\tilde{x}_k(t)$ and $Q(\tilde{x}_k(t))$, it follows that the matrix 
	$\begin{bmatrix}
		W & -S \\ -S^\top  & -G
	\end{bmatrix} \succeq 0$. Moreover, given that $R \succ 0$ and $\tau_k \ge 0$, the inequality \eqref{eq.P*K*1} implies that
	\begin{align}\label{eq.P*K*}
		& (A+BK_k^*)^{\top }P_k^*(A+BK_k^*)-\left[ \begin{matrix}
			P_k^* ~&~ 0  \\
			0 ~&~ 0  \\
		\end{matrix} \right]  \\
		\preceq & -\left[ \begin{matrix}
			E ~&~ 0  \\
			0 ~&~ 0  \\
		\end{matrix} \right] \! - \! (K_k^*)^{\top }RK_k^* \! - \! \tau_k \begin{bmatrix}
			W & -S \\ -S^\top  & -G
		\end{bmatrix} \! \preceq  \! -\left[ \begin{matrix}
			E ~&~ 0  \\
			0 ~&~ 0  \\
		\end{matrix} \right].	\nonumber
	\end{align}
	Due to $(A_s, B_s) \in \mathcal{S}$, pre- and post-multiplying both sides of the inequality \eqref{eq.P*K*} by $Z^\top(x_k)$ and $Z(x_k)$, respectively, yields
	\begin{align*}
		& \big((A_s+B_sK_k^*)Z(x_k)\big)^{\top }P_k^*(A_s+B_sK_k^*)Z(x_k)\nonumber \\
		&	-  Z^\top(x_k) \left[ \begin{matrix}
			P_k^* ~&~ 0  \\
			0 ~&~ 0  \\
		\end{matrix} \right]Z(x_k)\prec -Z^\top(x_k)\left[ \begin{matrix}
			E ~&~ 0  \\
			0 ~&~ 0  \\
		\end{matrix} \right]Z(x_k),	
	\end{align*}
	which implies that
	\begin{align}\label{eq.under}
		\left\| x_{k+1} \right\|_{P_k^*}^2 - \left\| x_{k} \right\|_{P_k^*}^2 \le -\underline{\lambda}(E)\left\| x_{k} \right\|_{2}^2 \le 0,
	\end{align}
	where $\underline{\lambda}(E)$ denote the smallest eigenvalue of $E$.
	Leveraging \eqref{eq.xpxr}, it follows that 
	\begin{equation}\label{eq.xpk}
		\left\| x_{k+1} \right\|_{P_k^*}^2  \le \left\| x_{k} \right\|_{P_k^*}^2  \le \gamma_k^*. 
	\end{equation}
	Consequently, the optimal solution at time $k$ remains feasible for the problem \eqref{sdp} at time $k+1$, thereby establishing recursive feasibility.
	
	(ii) Constraint satisfaction
	
	The proof of constraint satisfaction proceeds in two stages. First, we establish that
	$$\Pi_{RPI} =  \left\{  x \in \mathbb R^{n_x}: 	\| x\|_{P_{k}}^2 \le \gamma_k \right\}$$
	is a RPI set for all $(A, B) \in \mathcal {S}$. Second, we show that if the second LMI in \eqref{eq.con_2} is satisfied, then every state $x \in \Pi_{RPI}$ automatically satisfies the prescribed state constraint. These two steps follow an argument analogous to that in Lemma 1 and Theorem 3 of \cite{YifanX}, respectively.

	(iii)  Exponential stability
	
	Before proving the exponential stability, we first derive upper and lower bounds on the Lyapunov function $V(x_k) = \|x_k\|_{P_k^*}^2$.
	From the inequality \eqref{eq.under}, we obtain a lower bound as
	\begin{equation}\label{eq.low}
		\underline{\lambda}(E) \|x_k\|_2^2  \le \|x_{k}\|_{P_k^*}^2  - \|x_{k+1}\|_{P_k^*}^2  \le \|x_k\|_{P_k^*}^2.
	\end{equation}
	Furthermore, from (i), note that $P_{k}^*$ is a feasible solution to the SDP \eqref{sdp} at time step $k+1$, whereas $P_{k+1}^*$ denotes the optimal solution, it follows that
	\begin{align}\label{eq.xx}
		\|x_{k+1}\|_{P_{k+1}^*}^2 \le \|x_{k+1}\|_{P_{k}^*}^2,
	\end{align}
	which holds for $k \in \mathbb{Z}$. This implies that $\|x_k\|_{P_k^*}^2 \le \|x_k\|_{P_0^*}^2$.
	Together, these bounds yield the inequality
	$$\underline{\lambda}(E)\|x_k\|_2^2 \le \|x_k\|_{P_k^*}^2 \le \|x_k\|_{P_0^*}^2.$$
	Combining \eqref{eq.xx} with \eqref{eq.under} results in
	$$\|x_{k+1}\|_{P_{k+1}^*}^2 \! - \!  \|x_k\|_{P_k^*}^2 \le \|x_{k+1}\|_{P_k^*}^2  \!-\! \|x_k\|_{P_k^*}^2 \! \le \!  -\underline{\lambda}(E)\|x_k\|_2^2,$$
	which can be rewritten as
	$$V(x_{k+1}) - V(x_k) \le -\underline{\lambda}(E)\|x_k\|_2^2.$$
	This proves that the origin is exponentially stable under the proposed closed-loop control scheme. 
\end{pf}

\section{Data-driven Robust Min-max MPC \\ using noisy data}\label{sec.noisy}
The data-driven min-max MPC proposed in Section~\ref{sec.measure} relies on the assumption of noise-free online state measurements. However, this assumption is often violated in practice due to sensor noise and modeling uncertainties, which limits the applicability of the proposed framework in real-world systems. This motivates the development of a robust controller that explicitly accounts for noisy online state data, described by
$$x_{k+1} = A_s Z(x_k) + B_sK_kZ(x_k)+d_k.$$ 
When disturbances are present in both the offline data-collection phase and the online execution phase, the SDP constructed in Theorem~\ref{The.3}, although exhibiting a certain degree of robustness, may lead to degraded closed-loop performance. In particular, for sufficiently large disturbances, the closed-loop trajectories may lose stability guarantees. This can be attributed to the fact that constraint~\eqref{eq.A+BK} is derived under the assumption of noise-free online state measurements and therefore does not explicitly capture the influence of online disturbances on the closed-loop dynamics. As a result, the exponential stability guarantee established in Section~3 cannot be extended to ensure robust stability in the presence of online disturbances.

To overcome this limitation, we explicitly incorporate the impact of online disturbances on the closed-loop dynamics. Specifically, a new inequality, given in~\eqref{eq.barAsno}, is derived to enforce the decrement condition~\eqref{eq.A+BK} while explicitly accounting for bounded disturbances. Building on this condition, we formulate a new data-driven SDP and provide robust stability guarantees in the sense of the existence of a robust positively invariant (RPI) set around the origin, as well as recursive feasibility of the optimization problem. The overall control architecture is illustrated in Fig.~\ref{fig.111}.
\begin{figure}[h]
\centering
	\includegraphics[width=0.47\textwidth]{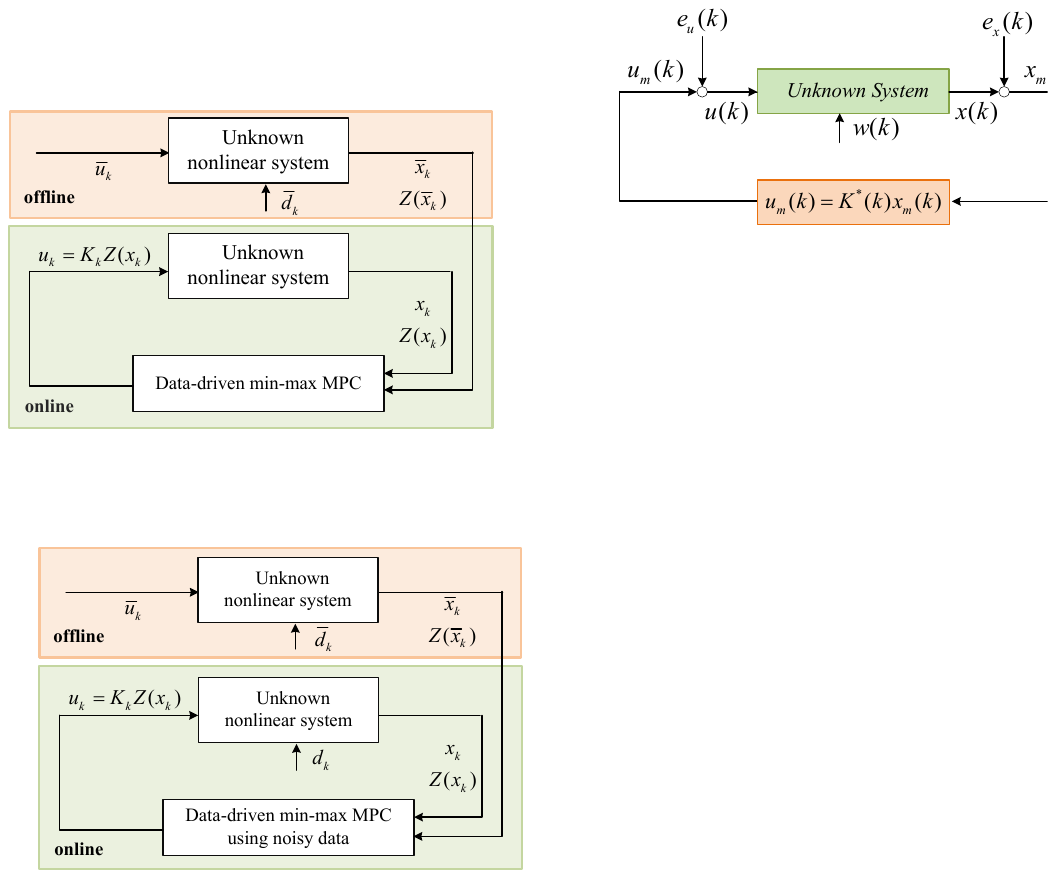}\caption{Control diagram of the proposed data-driven min-max MPC scheme.}
	\label{fig.111}
\end{figure}

\subsection{Data-driven Min-max MPC in the Noisy Case}\label{sec.4.1}
To address Problem \ref{prob.1}, we reformulate the problem \eqref{cost_uio1} into a computationally tractable SDP that accommodates both noisy offline input-state data and noisy online state measurements. The resulting SDP yields an upper bound on the optimal cost over the data set $\mathcal{S}$ consistent with the offline data and concurrently synthesizes a state-feedback gain that minimizes this bound.

\begin{theorem}\label{The.4} 
$\mathbf{(SDP \ for  \ data}$-$\mathbf{driven \ min}$–$\mathbf{max \ MPC  }$ 
$\mathbf{with \ noisy \ online  \ measurements )}$ Consider the nonlinear system \eqref{eq.system} subject to Assumptions~\ref{Ass.system}–\ref{Ass.noise}. 
	Let the decision variables be 
	$\gamma_k > 0$, $\lambda \in \mathbb{R}^{N}_+$, $P_k\succ0$, $K_k$,  along with nonnegative scalars 
	$\{\tau_k \ge 0, \alpha >  {\underline{\lambda}({ E})}\}$. Define auxiliary matrices $O_{k} = \begin{bmatrix} H_k & 0 \\ 0 & \gamma_kI \end{bmatrix} =\begin{bmatrix} \gamma_k P_k^{-1} & 0 \\ 0 & \gamma_k I \end{bmatrix} \in \mathbb R^{{n_z} \times {n_z}} $, $L_{k} = K_kO_{k}\in \mathbb R^{{n_u} \times {n_z}}$, $Y_{k} = \tau_k H_k$,  and $\mu_{k} =\tau_k \gamma_k$.
	If there exist a feasible solution for the SDP \eqref{sdpnoise}, where $M_E^\top M_E =E$, $M_R^\top M_R =R$, $M_W^\top M_W =W$,  and $\Pi(\lambda) =\sum\limits_{k=0}^{N-1}\lambda_k \mathcal {\bar Q}_k=\sum\limits_{k=0}^{N-1}\lambda_k
	\left[ \begin{matrix}
		I    &  \bar x_{k+1}  \\
		0   &  -Z(\bar x_k)  \\
		0  &   - \bar u_k  \\
	\end{matrix} \right] \left[ \begin{matrix}
		\varpi^2I  &    0  \\
		0  &    -I  \\
	\end{matrix} \right] \left[ \begin{matrix}
		I &  \bar x_{k+1}  \\
		0 & - Z(\bar x_k)  \\
		0 & -u_k  \\
	\end{matrix} \right]^\top$ \\
  then the following hold: i) the Lyapunov decrease condition \eqref{eq.xt+1-t} is satisfied; and ii) the state-feedback control law $u_k = K_k Z(x_k)$ guarantees that the optimal cost of \eqref{cost_uio1} is upper bounded by $\gamma_k$.
	\begin{figure*}[!h]
		\normalsize
		\begin{subequations}\label{sdpnoise}
			\begin{align}
				& \underset{\gamma_k, \mu_{k},\lambda \atop L_{k}, H_{k},  Y_{k}}{\mathop{\text{minimize}}} \  \gamma_k \label{eq.con_1_no}\\
				& \quad \quad {\rm {s.t.}}   \left[ \begin{matrix}
					1  &  x_k^\top \\	x_k &  H_k  \\
				\end{matrix} \right] \succeq 0,  \quad \quad \left[ \begin{matrix}
					H_k  & H_k \\	H_k  & S_x^{-1}  \\
				\end{matrix} \right] \succeq 0, \label{eq.con_2_no}\\
				&\quad \quad \quad ~ \left[ \begin{matrix}
					\left[ \begin{matrix}
						-H_k + \frac{\gamma_k}{\alpha}I  &  0  \\	0 & 0  \\ \end{matrix} \right]  +    \Pi(\lambda) &    \left[ \begin{matrix}
						0	\\ O_{k} \\ L_{k}
					\end{matrix} \right]   &  0  & 0 \\ \\
					\left[ \begin{matrix}
						0	~&~ O_{k}  ~&~  L_{k}^\top\\
					\end{matrix} \right]   &   	\left[ \begin{matrix}
						- H_k  \!&\! -  Y_{k}S\\
						-  S^\top Y_{k}  \!&\! -\mu_{k} G \end{matrix} \right]  &  
					\left[ \begin{matrix}
						L_{k}^\top M_R^\top 	 &   \left[ \begin{matrix}
							H_kM_E^\top	 \\  0 
						\end{matrix} \right] \\
					\end{matrix} \right]
					& \left[ \begin{matrix}
						Y_{k}M_W^\top \\ 0
					\end{matrix} \right] \\ \\
					0   &  \left[ \begin{matrix}
						M_RL_{k} \\ \left[ \begin{matrix}
							M_EH_{k}  & 0 \\
						\end{matrix} \right]
					\end{matrix} \right]   &  -\gamma_k I & 0  \\ \\
					0  &  \left[ \begin{matrix}
						M_WY_{k}	 &  0 \\
					\end{matrix} \right] & 0 & -\mu_{k} I
				\end{matrix} \right] \preceq  0,  \label{eq.LMIdelta_no} \\
				&\quad  \quad \quad ~~  \gamma_k > 0, \quad \mu_{k}\ge 0,\quad  \lambda =( \begin{matrix}
					\lambda_0, & \cdots , & \lambda_{N-1}  \\
				\end{matrix} )\ge 0. \label{eq.con_mu_no} 
			\end{align}	
		\end{subequations}
		\hrulefill
		\vspace*{4pt}
	\end{figure*}
\end{theorem}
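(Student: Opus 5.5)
Since the SDP \eqref{sdpnoise} differs from \eqref{sdp} only in the upper-left block of \eqref{eq.LMIdelta_no}, where $-H_k$ is replaced by $-H_k+\frac{\gamma_k}{\alpha}I$, the plan is to reduce Theorem~\ref{The.4} to Theorem~\ref{The.3} by a monotonicity argument. The two LMIs in \eqref{eq.con_2_no} are identical to \eqref{eq.con_2}. By the Schur complement they encode $\|x_k\|_{P_k}^2\le\gamma_k$, i.e.\ \eqref{eq.xpxr}, and the ellipsoidal state constraint \eqref{con4_uio1}, exactly as in the proof of Theorem~\ref{The.3}. It therefore remains to show that \eqref{eq.LMIdelta_no} implies \eqref{eq.LMIdelta}; once this holds, statements i) and ii) follow verbatim from Theorem~\ref{The.3}.

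For this implication, observe that \eqref{eq.LMIdelta_no} equals \eqref{eq.LMIdelta} plus the block-diagonal term $\mathrm{diag}\big(\frac{\gamma_k}{\alpha}I,0,\dots,0\big)$. This term is positive semidefinite because $\gamma_k>0$ and $\alpha>\underline{\lambda}(E)\ge 0$, so $\alpha>0$. Denote the left-hand side of \eqref{eq.LMIdelta_no} by $\mathcal M_{\rm no}$ and that of \eqref{eq.LMIdelta} by $\mathcal M$. Then $\mathcal M=\mathcal M_{\rm no}-\mathrm{diag}(\frac{\gamma_k}{\alpha}I,0,\dots)\preceq\mathcal M_{\rm no}\preceq 0$. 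Hence any feasible point of \eqref{sdpnoise} is feasible for \eqref{sdp}, and \eqref{eq.LMIdelta} holds.

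Rather than only citing Theorem~\ref{The.3}, I would also retrace its chain in reverse to record the stronger inequality for later use in robust stability. The steps are: undo the Schur complements that produced \eqref{eq.LMIxi} and \eqref{eq.xik}; apply the lossy S-procedure over $\mathcal S$; and recover a version of \eqref{eq.augIAB} with $H_k$ replaced by $H_k-\frac{\gamma_k}{\alpha}I$. Transforming back through \eqref{eq.APA+H1}–\eqref{eq.A+BK} should give the decrease condition with an additional margin, namely $(A+BK_k)^\top P_k(A+BK_k)$ dominated with a slack of roughly $\frac{\gamma_k}{\alpha}P_k^2$-type weighting relative to the $P_k^{-1}$ scaling. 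This margin is what will later absorb the online disturbance $d_k$ through a Young-type inequality with parameter $\alpha$.

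The main obstacle is tracking this margin correctly through the congruence by $\mathrm{diag}(P_k^{-1},I)$ and the scaling by $\gamma_k$, since the $\frac{\gamma_k}{\alpha}I$ term enters inside an inverse in the first Schur step. For the statement of Theorem~\ref{The.4} itself, however, this bookkeeping is unnecessary: the monotonicity argument above already yields i) and ii). The lossless S-procedure step for the nonlinearity constraint \eqref{eq.tildeWSG} is unchanged, because it acts only on the lower-right blocks involving $Y_k$, $\mu_k$, $S$, $G$ and $W$.
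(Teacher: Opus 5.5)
Your proof is correct, but it follows a different route from the paper's. The paper works forward from the disturbance-aware dissipation inequality \eqref{eq.xsigma1}. It applies the lossless S-procedure to get \eqref{eq.barAs}, Schur-complements this into $P_k\preceq\alpha I$ and \eqref{eq.sc1}, and uses the Woodbury identity $\big(H_k^{-1}-\tfrac{1}{\gamma_k}P_k(P_k-\alpha I)^{-1}P_k\big)^{-1}=H_k-\tfrac{\gamma_k}{\alpha}I$ to produce the modified block. It then repeats the lossy S-procedure and Schur steps of Theorem~\ref{The.3}, and finally argues separately that \eqref{eq.barAs} implies the nominal decrease condition. You instead observe that \eqref{eq.LMIdelta_no} is \eqref{eq.LMIdelta} shifted by $\mathrm{diag}(\tfrac{\gamma_k}{\alpha}I,0,\dots,0)\succeq 0$. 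Hence every feasible point of \eqref{sdpnoise} is feasible for \eqref{sdp}, and since i) and ii) are literally the conclusions of Theorem~\ref{The.3}, you are done. Your route is shorter and rigorous relative to Theorem~\ref{The.3}. It also avoids the paper's final step, which infers $\Omega_1\preceq 0$ from $\begin{bmatrix}-W & S\\ S^\top & G\end{bmatrix}\preceq 0$. Assumption~\ref{Ass.non} does not give that sign condition, and it fails whenever $G\succeq 0$ is nonzero, e.g.\ $G=0.01I$ in Section~\ref{sec.sim}. In exchange, the paper's route explains where $\tfrac{\gamma_k}{\alpha}I$ comes from and certifies the stronger inequality \eqref{eq.barAs}: $P_k\preceq\alpha I$ together with the $\alpha\|d_k\|_2^2$ supply term. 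Theorem~\ref{Th.2} actually uses these in \eqref{eq.upbound} and \eqref{eq.Vhatdelta1}, and monotonicity alone cannot deliver them. If you retrace the chain to get that inequality, note that the margin is not of $\tfrac{\gamma_k}{\alpha}P_k^2$ type. Since $(H_k-\tfrac{\gamma_k}{\alpha}I)^{-1}=\tfrac{1}{\gamma_k}(P_k^{-1}-\alpha^{-1}I)^{-1}$, undoing the $O_k$-congruence and the $\gamma_k$-scaling replaces $P_k$ in the quadratic term by $(P_k^{-1}-\alpha^{-1}I)^{-1}=P_k+P_k(\alpha I-P_k)^{-1}P_k$. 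The margin is therefore $(A+BK_k)^\top P_k(\alpha I-P_k)^{-1}P_k(A+BK_k)$, independent of $\gamma_k$, which is exactly \eqref{eq.sc1}. Finally, $P_k\preceq\alpha I$ (equivalently $H_k\succeq\tfrac{\gamma_k}{\alpha}I$) follows by compressing the first diagonal block of \eqref{eq.LMIdelta_no} with $[\,I\;A_s\;B_s\,]$ and using $(A_s,B_s)\in\mathcal S$ and $\lambda\ge 0$.
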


\begin{pf}
	Applying the Schur complement, the inequality constraint in \eqref{eq.xpxr} can be equivalently rewritten as the first LMI in \eqref{eq.con_2_no}. Moreover, in analogy with Theorem \ref{The.3}, it is necessary to ensure that the inequality \eqref{eq.A+BK} is satisfied for all system matrices $( A, B)\in {\mathcal{S}}$. However, due to online noisy state, ensuring that inequality \eqref{eq.A+BK} holds for nominal system is insufficient to guarantee the robust stability of the closed-loop system. To address this limitation, we assume that all predicted vectors $Z(\tilde x_k(t))$ satisfy the following inequality
	\begin{align}\label{eq.real}
		&	\big(\left(A+  B K_k\right) Z(\tilde x_k(t)) + d_k \big)^\top P_k [*] - \alpha d_k^\top d_k \nonumber \\
		  - & Z^\top(\tilde x_k(t)) \left[ \begin{matrix}
			P_k ~&~ 0  \\
			0 ~&~ 0  \\
		\end{matrix} \right]  Z(\tilde x_k(t)) \nonumber \\
	 \le &	- Z^\top(\tilde x_k(t)) \left(\left[ \begin{matrix}
			E ~&~ 0  \\
			0 ~&~ 0  \\
		\end{matrix} \right]+K_k^{\top}RK_k\right)Z(\tilde x_k(t)),
	\end{align} 
	which is equivalent to
	\begin{equation}\label{eq.xsigma1}
		\left[ \begin{matrix}
			Z(\tilde x_k(t))  \\
			d(k)  \\
		\end{matrix} \right]^\top  \left[ \begin{matrix}
			\Omega_1 &  (A+BK_k)^\top P_k \\
			P_k(A+BK_k) & P_k-\alpha I  \\
		\end{matrix} \right] \left[ * \right]   \le  0,
	\end{equation}
	where $\Omega_1 =(A+BK_k)^{\top}P_k(A+BK_k)-\left[ \begin{matrix}
		P_k ~&~ 0  \\
		0 ~&~ 0  \\
	\end{matrix} \right] +\left[ \begin{matrix}
		E ~&~ 0  \\
		0 ~&~ 0  \\
	\end{matrix} \right]+K_k^{\top}RK_k $. 
Furthermore, we require \eqref{eq.xsigma1} to hold for all $Z(\tilde{x}_k(t))$  that satisfy the nonlinear 
	constraint \eqref{eq.tildeWSG}.  By invoking the lossless S-procedure \cite{nqc}, inequality \eqref{eq.xsigma1} holds if and only if there exists a scalar $\tau_k \ge 0$ such that
\begin{align}\label{eq.barAsno}
	\begin{bmatrix}
		\Omega_1 & (A + BK_k)^\top P_k \\
		P_k(A + BK_k) & P_k - \alpha I
	\end{bmatrix} 
	\!- \! 
	\begin{bmatrix}
		\tau_k \begin{bmatrix} -W & S \\ S^\top & G \end{bmatrix} &~ 0 \\
		0 &~ 0
	\end{bmatrix} \! \preceq \! 0,
\end{align}
which can be equivalently rewritten as
\begin{align}\label{eq.barAs}
	\begin{bmatrix}
		\Omega_1 - \tau_k \begin{bmatrix} -W & S \\ S^\top & G \end{bmatrix} ~&~ (A + BK_k)^\top P_k \\
		P_k(A + BK_k) ~&~ P_k - \alpha I
	\end{bmatrix} \preceq 0.
\end{align}
	We aim to demonstrate two key points: i) inequality \eqref{eq.barAs} can be reformulated as an LMI, and ii) the satisfaction of \eqref{eq.barAs} implies that the decrement condition \eqref{eq.A+BK} is also satisfied. Together, these results ensure that the condition \eqref{eq.A+BK} holds while explicitly accounting for the impact of process disturbances.
	
	By applying the Schur complement to \eqref{eq.barAs}, we obtain a set of equivalent matrix inequalities
	\begin{align}
		&  P_k -\alpha I \preceq  0, \label{eq.p<a} \\
		& 	\Omega_1  - \tau_k  
		\left[ \begin{matrix}
			-W & S  \\
			S^\top  & G  \\
		\end{matrix} \right] \nonumber \\
		- & (A+BK_k)^\top P_k(P_k - \alpha I)^{-1}P_k(A+BK_k) \preceq  0. \label{eq.sc1} 
	\end{align}
	The inequality \eqref{eq.sc1} can then be expressed as a LMI using the historical dataset $\mathcal{S}$. The derivation follows similar steps to those in Theorem \ref{The.3}; hence, we provide only a sketch of the proof. Following \eqref{eq.APA+H}--\eqref{eq.A_sH11}, inequality \eqref{eq.sc1} can be arranged as
	\begin{align}\label{eq.APA+H11}
		& \left(H_k^{-1}-\frac{1}{\gamma_k}P_k(P_k -\alpha I)^{-1}P_k \right)^{-1} \!- \! \left( A O_{k} \! +\! B L_{k} \right)      \nonumber  \\
		\times	&\left\{ \! -  \frac{1}{\gamma_k} L_{k}^\top R L_{k} \! +	\!  \begin{bmatrix}
			H_k	\!\!-\!\! \frac{1}{\gamma_k} H_k (E \!\! +\! \! \tau_kW) H_k   \!&~ \tau_k H_kS\\
			\tau_k S^\top H_k  \!&~ \gamma_k\tau_kG
		\end{bmatrix} \right\}^{-1} \nonumber \\
		\times	&\left( A O_{k} + B L_{k} \right)^\top  \succeq  0.
	\end{align}
	Using the Woodbury matrix identity \cite{Wood1950}, the first term in \eqref{eq.APA+H11} simplifies to
	$$\left(H_k^{-1}-\frac{1}{\gamma_k}P_k(P_k -\alpha I)^{-1}P_k \right)^{-1} = H_k -\frac{\gamma_k}{\alpha}I.$$
	Substituting this expression into \eqref{eq.APA+H11} and using the notation $\xi_k$ and $F_k$ defined in \eqref{eq.A_sH11} and Theorem \ref{The.4} result in
	\begin{align}\label{eq.xi_no}
		& H_k \! - \! \frac{\gamma_k}{\alpha}I \!- \! \left( A O_{k} + B L_{k} \right) (-\xi_k)^{-1} \left( A O_{k} + B L_{k} \right)^\top  \succeq  0,
	\end{align}
	which is equivalent to 
	\begin{align}
		\left[ \begin{matrix}
			I   \\
			A^\top  \\
			B^\top  \\
		\end{matrix} \right]^\top \!\! \left[ \begin{matrix}
			H_k -\frac{\gamma_k}{\alpha}I &  0  \\
			0  &  	\left[ \begin{matrix}
				O_{k}  \\
				L_{k}  \\
			\end{matrix} \right]\xi_k^{-1}\left[ \begin{matrix}
				O_{k}  \\
				L_{k}  \\
			\end{matrix} \right]^\top  \\
		\end{matrix} \right] \!\! \left[ \begin{matrix}
			I   \\
			A^\top  \\
			B^\top   \\
		\end{matrix} \right] \succeq 0. \label{eq.augIAB_no}
	\end{align}
	Following similar derivation steps as in \eqref{eq.xi}–\eqref{eq.xik}, the LMI in \eqref{eq.LMIdelta_no} is obtained.
	
	We next show that if inequality \eqref{eq.barAs} holds, then the decrease condition \eqref{eq.A+BK} is also satisfied. Substituting \eqref{eq.p<a} into \eqref{eq.sc1} and rearranging terms yields
	\begin{align*} 
		& \Omega_1  - \tau_k  
		\left[ \begin{matrix}
			-W & S  \\
			S^\top  & G  \\
		\end{matrix} \right] \nonumber \\
		\preceq & (A+BK_k)^\top P_k(P_k -\alpha I)^{-1}P_k(A+BK_k) \preceq 0. 
	\end{align*}
	Given the matrix 
	$\begin{bmatrix}
		-W & S \\ S^\top  & G
	\end{bmatrix} \preceq 0$ and scalar  $\tau_k \geq 0$, it follows from
	\begin{align*} 
		& \Omega_1   \preceq \tau_k  
		\left[ \begin{matrix}
			-W & S  \\
			S^\top  & G  \\
		\end{matrix} \right] \preceq 0
	\end{align*}
	that the following holds 
	\begin{align*} 
		& (A+BK_k)^{\top}P_k[*] \! - \!  \left[ \begin{matrix}
			P_k ~&~ 0  \\
			0 ~&~ 0  \\
		\end{matrix} \right] \! + \! \left[ \begin{matrix}
			E ~&~ 0  \\
			0 ~&~ 0  \\
		\end{matrix} \right]+K_k^{\top}RK_k \preceq 0.
	\end{align*}
	This construction ensures that the closed-loop system satisfies the decrement condition in \eqref{eq.A+BK}. 
	
	Additionally, the state  constraint in \eqref{con4_uio1} can be reformulated as data-based LMIs given in \eqref{eq.con_2_no}. The reformulation follows the same argument as in Theorem \ref{The.3} and are omitted here for brevity. Consequently, if Theorem \ref{The.4} holds, the optimal cost in \eqref{cost_uio1} is guaranteed to be at most $\|x_k\|_{P_k}^2$, and this cost is upper bounded by $\gamma_k$.
\end{pf}

\subsection{Closed-loop guarantees}\label{sec.4.2}
The following theorem establishes the recursive feasibility of the optimization problem \eqref{sdpnoise}, as well as robust stability of the resulting closed-loop system $x_{k+1} = A_s Z(x_k) + B_sK_kZ(x_k)+d_k$  from the data-driven SDP scheme. Furthermore, we also prove that the state constraint is satisfied for the closed-loop trajectory.

\begin{theorem}\label{Th.2} 
	Suppose that Assumptions \ref{Ass.system}---\ref{Ass.noise} are satisfied and the optimization problem \eqref{sdpnoise} is feasible at time $k \! = \! 0$ with $\gamma^*_{0,m} \! \ge \! \frac{\alpha^2\varpi^2}{\underline{\lambda}({E})}$. Let 
	$\Pi_{ROA} =  \left\{  x \in \mathbb R^{n_x}: 	\| x\|_{P^*_{0,m}}^2 \le \gamma^*_{0,m} \right\}$ and
	$\Pi_{RPI} =  \big\{  x \in \mathbb R^{n_x}: $ $ 	\| x\|_{P^*_{k,m}}^2 \le \frac{\alpha^2\varpi^2}{\underline{\lambda}({E})} \big\}.$ Then,
	\begin{itemize} 
		\item[i)] the optimization problem \eqref{sdpnoise} is feasible for any 	states $x_k \in \Pi_{ROA}  \backslash  \Pi_{RPI}$; 
		\item[ii)] the closed-loop trajectory satisfies the state constraint, i.e., $\| x_k \|_{S_x}^2 \le 1$ ; and, 
		\item[iii)] the closed-loop system is robustly stabilized to a robust positive invariant set, i.e., $\Pi_{RPI}$.
	\end{itemize}
\end{theorem}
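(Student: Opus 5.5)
The plan is to use the disturbance-aware dissipation inequality \eqref{eq.barAs}, certified in the proof of Theorem~\ref{The.4}, as the central tool for all three claims. Fix a time $k$ at which \eqref{sdpnoise} is feasible, with optimal solution $(\gamma_k^*,P_k^*,K_k^*,\tau_k)$. Since $(A_s,B_s)\in\mathcal S$, inequality \eqref{eq.xsigma1} holds for the true system with $Z(x_k)$, which satisfies \eqref{eq.WSG}, and with the actual disturbance $d_k$. Rearranging, and using $x_{k+1}=(A_s+B_sK_k^*)Z(x_k)+d_k$, $K_k^{*\top}RK_k^*\succeq0$ and $\|d_k\|_2^2\le\varpi^2$, I expect to obtain the one-step bound
\begin{equation*}
\|x_{k+1}\|_{P_k^*}^2-\|x_k\|_{P_k^*}^2\le -\underline{\lambda}(E)\|x_k\|_2^2+\alpha\varpi^2 .
\end{equation*}
This is the noisy analogue of \eqref{eq.under}, and everything else builds on it.

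\textbf{Recursive feasibility (i).} I will show that the time-$k$ optimizer remains feasible at $k+1$ whenever $x_k\notin\Pi_{RPI}$. The LMIs \eqref{eq.con_2_no} (second part) and \eqref{eq.LMIdelta_no} do not depend on $x_k$. Only the first LMI of \eqref{eq.con_2_no}, i.e.\ $\|x_{k+1}\|_{P_k^*}^2\le\gamma_k^*$, has to be checked. It suffices that $-\underline{\lambda}(E)\|x_k\|_2^2+\alpha\varpi^2\le0$. To get this I need to relate $\|x_k\|_2^2$ to $\|x_k\|_{P_k^*}^2$. From \eqref{eq.p<a}, $P_k^*\preceq\alpha I$, so $\|x_k\|_{P_k^*}^2\le\alpha\|x_k\|_2^2$. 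Outside $\Pi_{RPI}$ we have $\|x_k\|_{P^*}^2>\alpha^2\varpi^2/\underline{\lambda}(E)$, which gives $\|x_k\|_2^2>\alpha\varpi^2/\underline{\lambda}(E)$, and hence the right-hand side above is negative.

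This is the step I expect to be delicate, for two reasons. The definition of $\Pi_{RPI}$ uses $P^*_{k,m}$ rather than a fixed matrix, and one must make sure the same $P$ is used at consecutive steps. To handle this I will follow \eqref{eq.xx}: optimality at $k+1$ gives $\|x_{k+1}\|_{P_{k+1}^*}^2\le\|x_{k+1}\|_{P_k^*}^2$. The factor $\alpha^2$ in $\Pi_{RPI}$ comes precisely from the conversion via $P^*\preceq\alpha I$. The condition $\gamma^*_{0,m}\ge\alpha^2\varpi^2/\underline{\lambda}(E)$ guarantees that $\Pi_{RPI}\subseteq\Pi_{ROA}$, so the set difference in (i) is meaningful.

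\textbf{Constraint satisfaction (ii).} By induction, $\|x_k\|_{P_k^*}^2\le\gamma_k^*\le\gamma_0^*$ along the trajectory. For states inside $\Pi_{RPI}$, the bound also gives $\|x_{k+1}\|_{P^*}^2\le\|x_k\|_{P^*}^2+\alpha\varpi^2-\underline{\lambda}(E)\|x_k\|^2$. I will argue, as in the RPI argument of Lemma~1 of \cite{YifanX}, that the sublevel set $\{\|x\|_{P^*}^2\le\gamma^*\}$ is invariant. The second LMI of \eqref{eq.con_2_no} gives $H_k\preceq S_x^{-1}$, i.e.\ $S_x\preceq P_k^*/\gamma_k^*$. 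Therefore $\|x_k\|_{S_x}^2\le\|x_k\|_{P_k^*}^2/\gamma_k^*\le1$.

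\textbf{Robust stability (iii).} I take $V_k=\|x_k\|_{P_k^*}^2$ as a Lyapunov-like function. Combining \eqref{eq.xx} with the one-step bound gives $V_{k+1}-V_k\le-\underline{\lambda}(E)\|x_k\|^2+\alpha\varpi^2$. Outside $\Pi_{RPI}$ this is at most $-\underline{\lambda}(E)\|x_k\|^2+\alpha\varpi^2<0$, and it is uniformly negative away from the boundary, so $V_k$ decreases strictly until the state enters $\Pi_{RPI}$; a finite number of steps should follow from a uniform decrement argument. Once inside, the one-step bound together with the level-set invariance argument should keep the state in $\Pi_{RPI}$, which establishes robust convergence to the RPI set.
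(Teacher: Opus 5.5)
Your proposal is correct and is essentially the paper's argument. The shared ingredients are: the one-step dissipation bound from \eqref{eq.barAs} at $(A_s,B_s)$ and the realized $d_k$; the conversion via $P_k^*\preceq\alpha I$, which the paper packages as the contraction \eqref{eq.Ly}; optimality \eqref{eq.xx} to pass from $P_k^*$ to $P_{k+1}^*$; and $S_x\preceq P_k^*/\gamma_k^*$ for the state constraint. Your evaluation of \eqref{eq.WSG} along the true $Z(x_k)$ is in fact cleaner than the paper's claim that the multiplier matrix in \eqref{eq.WSG} is negative semidefinite.

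The only overstatement is finite-time entry into $\Pi_{RPI}$ in iii). The decrement is uniform only on $\{\|x\|_{P_k^*}^2\ge\alpha^2\varpi^2/\underline{\lambda}(E)+\epsilon\}$, so what you actually get is geometric decay of the excess of $V_k$ over $\alpha^2\varpi^2/\underline{\lambda}(E)$, exactly as in \eqref{eq.Ly}. Invariance of $\Pi_{RPI}$ likewise needs this contraction form (the one-step bound combined with $V_k\le\alpha\|x_k\|_2^2$), not the one-step bound alone. Neither point affects the stated result.
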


\begin{pf}The proof is similar to that of Theorem 2 in \cite{XieAuto}, so we only provide a sketch of the proof here.   
	
	(i) Recursive feasibility
	
	Suppose that problem \eqref{sdpnoise} is feasible at time $k$. We first establish both lower and upper bounds on the function $V(x_k)$. Since Theorem \ref{The.3} ensures that inequality \eqref{eq.v1} holds, combining it with \eqref{con3_uio1} yields
	\begin{align}\label{eq.low}
		V(x_k) & \ge \underset{(A, B)\in {\mathcal{S}}}{\mathop {\max}}
		\sum\limits_{t=0}^{\infty}l(\tilde x_t(k), K(k)\tilde x_t(k) ) \\ \nonumber
		& \ge  l(x_k, u_k) \ge \underline{\lambda}(E) \| x_k\|_2^2.
	\end{align}
	Applying \eqref{eq.p<a} provides the following upper bound on $V(x_k)$
	\begin{equation}\label{eq.upbound}
		\| x_k\|_{P_{k}}^2   \le \alpha\| x_k\|_2^2.
	\end{equation}
	Then, combining these results, we obtain that
	\begin{equation}\label{eq.bound}
		\underline{\lambda}(E) \| x_k\|_2^2 \le \| x_k\|_{P_{k}}^2  \le \alpha\| x_k\|_2^2.
	\end{equation}
	Moreover, as established in Theorem \ref{The.4}, the matrix inequality \eqref{eq.barAs} holds. Denote by $K_{k,m}^*$ and $P_{k,m}^*$ the optimal state-feedback gain and the associated Lyapunov matrix, respectively, obtained from the solution of \eqref{sdpnoise}.
	Due to system matrices $(A_s,B_s) \in \mathcal{S}$, substituting $K_k = K_{k,m}^*$ and $P_k = P_{k,m}^*$ into  \eqref{eq.barAs}, 
	and pre- and post-multiplying the resulting expression by $\left[ \begin{matrix}
		Z(x_k)^\top & d_k^\top \\
	\end{matrix} \right]$ and its transpose, respectively, yields
	\begin{align}\label{eq.Vhatdelta1}
		&	\big(\left(A_s+  B_s K^*_{k,m}\right) Z(x_k) + d_k \big)^\top P^*_{k,m} [*] \nonumber \\
		+ & Z^\top(x_k) \left(\left[ \begin{matrix}
			E &~ 0  \\
			0 &~ 0  \\
		\end{matrix} \right] \!+ \! (K^*_{k,m})^{\top}RK^*_{k,m} \!- \! \tau_k  
		\left[ \begin{matrix}
			-W &  S  \\
			S^\top & G  \\
		\end{matrix} \right] \right)[*]  \nonumber \\
		- & \alpha d_k^\top d_k - Z^\top( x_k) \left[ \begin{matrix}
			P^*_{k,m} &~ 0  \\
			0 &~ 0  \\
		\end{matrix} \right] [*]  \le  0	.
	\end{align}

	Given that $R \succ 0$, $\tau_k \ge 0$ and $\begin{bmatrix}
		-W & S \\ S^\top  & G
	\end{bmatrix} \preceq 0$, the inequality \eqref{eq.Vhatdelta1} implies that
	\begin{align}\label{eq.Vhatdelta}
		&	\big(\left(A_s+  B_s K^*_{k,m}\right) Z(x_k) + d_k \big)^\top P^*_{k,m} [*] - x^\top_k P^*_{k,m}x_k \nonumber \\
		\le &  Z^\top(x_k) \left(-\left[ \begin{matrix}
			E ~&~ 0  \\
			0 ~&~ 0  \\
		\end{matrix} \right] \!- \! (K^*_{k,m})^{\top}RK^*_{k,m} \!+ \! \tau_k  
		\left[ \begin{matrix}
			-W &  S  \\
			S^\top & G  \\
		\end{matrix} \right] \right)  \nonumber \\
		& \times Z(x_k) + \alpha d_k^\top d_k \nonumber \\
		\le & -Z^\top(x_k) \left[ \begin{matrix}
			E ~&~ 0  \\
			0 ~&~ 0  \\
		\end{matrix} \right]Z(x_k) + \alpha d_k^\top d_k \nonumber \\
		= & -x^\top_k E x_k + \alpha d_k^\top d_k.
	\end{align}

	Recalling Assumption \ref{Ass.noise} and inequality \eqref{eq.bound}, the inequality \eqref{eq.Vhatdelta} above can be expressed as
	\begin{align}\label{eq.x}
		&	\| x_{k+1}\|_{P^*_{k,m}}^2 - 	\| x_k\|_{P^*_{k,m}}^2  \\
		\le & 	-  \underline{\lambda}(E)\| x_k\|_2^2 +  \alpha\varpi^2 \le  - \frac{\underline{\lambda}(E)}{\alpha} \| x_k\|_{P^*_{k,m}}^2 + \alpha\varpi^2.  \nonumber
	\end{align} 
	Subtracting $ \frac{\alpha^2\varpi^2}{\underline{\lambda}(E)}$ from both sides of the inequality \eqref{eq.x} and adding $\| x_k\|_{P^*_{k,m}}^2$ to both sides, we obtain that
	\begin{align}\label{eq.Ly}
		\| x_{k+1}\|_{P^*_{k,m}}^2  \! - \!  \frac{\alpha^2\varpi^2}{\underline{\lambda}(E)}
		 \! \le \!  \left(1  \! - \!  \frac{\underline{\lambda}({E})}{\alpha}\right) \! \left(\| x_k\|_{P^*_{k,m}}^2  \! - \!  \frac{\alpha^2\varpi^2}{\underline{\lambda}(E)} \right). 
	\end{align}
	
	We now proceed to establish recursive feasibility. Consider the case where the state lies outside the set $\Pi_{RPI}$, i.e., $\frac{\alpha^2\varpi^2}{\underline{\lambda}({E})} < \| x_k\|_{P^*_{k,m}}^2 \le \gamma^*_{k,m} $. Noting that $\alpha > {\underline{\lambda}({E})}$, it follows that  $0<1  -  \frac{\underline{\lambda}({E})}{\alpha}<1$. Consequently, inequality \eqref{eq.Ly} leads to
	\begin{equation}\label{eq.xkk}
		\| x_{k+1}\|_{P^*_{k,m}}^2  -  \frac{\alpha^2\varpi^2}{\underline{\lambda}({E})} \le   \| x_k\|_{P^*_{k,m}}^2 -  \frac{\alpha^2\varpi^2}{\underline{\lambda}({E})}.
	\end{equation}
Then, we have that
	\begin{equation}\label{eq.xkkk}
	\| x_{k+1}\|_{P^*_{k,m}}^2  \le   \| x_k\|_{P^*_{k,m}}^2 \le \gamma^*_{k,m},
\end{equation}
	which implies that the feasible solution of the optimization problem \eqref{sdpnoise} at time $k$ is also feasible at time $k+1$.
	
	(ii) Constraint satisfaction
	
	In this part, we separately consider the cases $x_k \in \Pi_{ROA}  \backslash  \Pi_{RPI}$ and $x_k \in  \Pi_{RPI}$ . In both cases, the proofs of state constraint satisfaction for states outside and within the RPI set follow the same reasoning as in \cite[Theorem 2]{XieAuto}.
	
	(iii) Robust stability
	
	Building upon the preceding analysis, we establish robust stability of the closed-loop system. When state $ x_k$ is outside the set $\Pi_{RPI}$, i.e., $x_k \in \Pi_{ROA}  \backslash  \Pi_{RPI}$, the fact that $P_k^*$ is a feasible solution and $P_{k+1}^*$ is the optimal solution to the SDP \eqref{sdpnoise} at time step $k+1$ implies, via inequalities \eqref{eq.xkk}--\eqref{eq.xkkk}, that
	\begin{align*}
		& \| x_{k+1}\|_{P^*_{{k+1},m}}^2  - \frac{\alpha^2\varpi^2}{\underline{\lambda}({E})}	\nonumber \\
		 \le & \| x_{k+1}\|_{P^*_{k,m}}^2  -  \frac{\alpha^2\varpi^2}{\underline{\lambda}({E})}  \le  \| x_k\|_{P^*_{k,m}}^2 -  \frac{\alpha^2\varpi^2}{\underline{\lambda}({E})}.
	\end{align*}
	When the state $x_k$ lies within the set  $ \Pi_{RPI}$, we further obtain that
	\begin{equation}
		\|x_{k+1}\|_{P^*_{{k+1},m}}^2   \! \le \!  \|x_{k+1}\|_{P^*_{k,m}}^2  \! \le  \! \|x_{k}\|_{P^*_{k,m}}^2  \! \le \! \frac{\alpha^2\varpi^2}{\underline{\lambda}({E})} \le \gamma^*_{0,m}.
	\end{equation}
	Thus, we conclude that the set $ \Pi_{RPI}$  is robustly stabilized for the closed-loop system.
\end{pf}
To regulate the unknown nonlinear system \eqref{eq.system} in the presence of online and offline disturbances, the SDP \eqref{sdpnoise} is solved within a receding-horizon control framework, as outlined in Algorithm 2.  At each time step $k$, optimal solution of the problem \eqref{sdpnoise} is denoted by $(\gamma^*_{k,m}, L^*_{k,m}, H^*_{k,m}, \lambda^*_m, Y^*_{k,m},\mu^*_{k,m})$, providing the optimal state-feedback gain 
\begin{equation*}
	K^*_{k,m}=L^*_{k,m}\left(\begin{bmatrix} H_{k,m}^* & 0 \\ 0 & \gamma_{k,m}^*I \end{bmatrix}\right)^{-1}
\end{equation*} and corresponding matrix $P^*_{k,m}$.

{\centering
	\begin{tabular}{ll}
		\toprule
		\multirow{1}{*}{Algorithm 2: Data-driven robust min-max MPC.} \\
		\midrule
		\multirow{1}{*}
		{~1. At time $k$, measure state $x_k$ and calculate $Z(x_k)$  }  \\ 
		{~2. Solve the problem \eqref{sdpnoise} and obtain $K^*_{k,m}$} \\
		{~3. Apply the input $u_k=K^*_{k,m}Z(x_k)$} \\
		{~4. Setting $k=k+1$, measure the state $x_k$ } \\
		{\quad ~ and calculate $Z(x_k)$} \\
		{~5. Solve the optimization problem \eqref{sdpnoise} } \\
		{~6. if   $\gamma^*_{k,m}> \frac{\alpha^2\varpi^2}{\underline{\lambda}({E})}$ then} \\
		{~7. ~~~Apply the input $u_k=K^*_{k,m}Z(x_k)$} \\
		{~8. ~~~Set $k \!= \! k \!+ \! 1$ and measure  $x_k$, go back to 5} \\
		{~9. else if   $\gamma^*_{k,m} \le \frac{\alpha^2\varpi^2}{\underline{\lambda}({E})}$ then} \\
		{10. \quad Set $K_{k}=K^*_{{k-1},m}$ and $P_{k}=P^*_{{k-1},m}$} \\
		{11. \quad Apply the input $u_{k}=K_{k}Z(x_k)$} \\
		{12. \quad Setting $k \!= \! k \! + \! 1$, measure $x_k$ and calculate $Z(x_k)$} \\
		{\quad \quad  ~ go back to 11} \\
		{13. end if} \\
		\bottomrule
	\end{tabular}}

\begin{remark}
	In this section, we reformulate the data-driven min–max MPC problem with noisy online measurements as an SDP, and establish recursive feasibility, constraint satisfaction, and robust stability of the resulting closed-loop system. While the derivation of the SDP formulation and the associated guarantees is inspired by \cite{XieAuto}, the proposed framework is tailored to nonlinear systems rather than linear ones, thereby introducing additional technical challenges. Specifically, the proposed SDP is designed to ensure not only that the Lyapunov decrease condition \eqref{eq.xt+1-t} is satisfied, but also that the nonlinear constraint \eqref{eq.WSG} holds. Furthermore, since ${x}_{k+1} = (A + B K_k) Z({x}_k)$ comprises linear and nonlinear components, the analysis must explicitly account for the effect of the nonlinear term. Although Theorem \ref{Th.2} establishes robust stability for the closed-loop system, the result holds only for systems satisfying Assumption \ref{Ass.non}, which imposes certain limitations. Future work will extend the proposed framework to more general classes of nonlinear systems.
\end{remark}
\begin{remark}
The proposed approaches in Theorems \ref{The.3} and \ref{The.4} exhibit some conservatism. Specifically, for both the noise-free and noisy online measurement scenarios, solving the SDP problems \eqref{sdp} or \eqref{sdpnoise} yields an upper bound on the objective of problem \eqref{cost_uio1} while guaranteeing satisfaction of the imposed nonlinear constraints. A primary source of conservatism arises from the restriction to a state-feedback control law of the form $u_k=B_s K_k Z(x_k)$. Moreover, the upper bound may not always be tight, and constraining the cost upper bound to be quadratic may further enlarge the gap. An additional source of conservatism arises from the use of the lossy S-procedure in \eqref{eq.augIAB} and \eqref{eq.augIAB_no} within Theorems~\ref{The.3} and~\ref{The.4}.
Specifically, in Theorem~1 (see \eqref{eq.augIAB}), the lossy S-procedure is employed to derive sufficient conditions under which inequality~\eqref{eq.augIAB} holds simultaneously for all $(A,B) \in \Xi_k$ defined in \eqref{eq.Qk}, for $k = 0,1,\ldots,N-1$. This relaxation is inherently conservative. The same reasoning applies to Theorem~\ref{The.3}.
In contrast, in \eqref{eq.A+BK} and \eqref{eq.barAs}, the lossless S-procedure is applied, yielding necessary and sufficient conditions and thus introducing no additional conservatism. The reduction of conservatism through more general state-feedback laws and less restrictive cost upper-bounding functions remains an open challenge and constitutes an important direction for future research.
\end{remark}
\begin{remark}
As established in Theorems \ref{The.4} and \ref{Th.guarantee}, the size of the RPI set $\Omega_{RPI}$ is governed by the preselected design parameters $\alpha$ and $E$, together with the disturbance bound $\varpi$. The parameter $\alpha$ must satisfy $\alpha > \underline{\lambda}(E)$. Smaller values of $\alpha$ and $\varpi$ yield a smaller ellipsoid invariant set to which the closed-loop system converges, while also corresponding to a smaller feasible region. As discussed in \cite{XieAuto}, choosing $\alpha$ overly small renders the feasible region excessively restrictive, which can degrade closed-loop performance or even lead to infeasibility at the initial time. Conversely, if $\alpha$ is chosen too large, the initial state may already lie inside the RPI set, causing the controller to effectively reduce to a static state-feedback law and resulting in poor performance. Moreover, in the proposed SDP formulation, the matrix $S_x$ in the state constraint \eqref{eq.con_2} or \eqref{eq.con_2_no} is user-specified. In particular, smaller values of $S_x$ correspond to a larger admissible state constraint set and do not introduce feasibility issues. Conversely, larger values of $S_x$ lead to a tighter state constraint set, which may render the SDP infeasible at the initial time step.
\end{remark}
\begin{remark}
In this work, we first consider the case of noise-free online measurements. For this setting, the constructed constraint \eqref{eq.A+BK} ensures that the optimal cost of \eqref{cost_uio1} is upper bounded by $\gamma_k$ while simultaneously satisfying the imposed nonlinear constraints. Building on this result, we then extend the proposed framework to explicitly account for noisy online measurements. This two-stage treatment provides a more comprehensive analysis and facilitates future research.
In particular, Theorem \ref{The.3} suggests natural extensions to other classes of nonlinear systems and alternative constraint structures, while Theorem \ref{The.4} motivates further investigations into robustness issues, including measurement errors and output-feedback control in scenarios where online measurements are unavailable.	
\end{remark}

\section{Simulation Results}\label{sec.sim}
We illustrate the proposed methods on two nonlinear systems. We compare our schemes (D-MPC for noise-free and D-RMPC for noisy) with two benchmarks: i) an identification-based robust min-max MPC (ID-RMPC) that first identifies the system matrices $(A,B)$ via least-squares based on the offline data and subsequently applies a robust min-max MPC; and, ii) a data-driven static feedback controller (D-FC) based on a contraction/Lyapunov condition \cite{HuZJ}. The simulations are performed on a desktop computer with an Intel Core i7-11700K processor (8 cores, 16 threads, 3.60 GHz) using MATLAB 2022a. All semidefinite programs are solved via YALMIP \cite{Yalmip}.

\subsection{Flexible-Joint Robot with Noise-free Online Measurements}\label{sec.sim1}
We consider an Euler-discretized model of a flexible-joint robot, adapted from \cite{robotmo}, given by
$$\left\{
\begin{aligned}
	\theta_{m,k+1} &= \theta_{m,k} + T_s \omega_{m,k} + d_{1,k}, \\
	\omega_{m,k+1} &= \omega_{m,k} + \frac{K_aT_s}{J_m}(\theta_{l,k}-\theta_{m,k})-\frac{B_mT_s}{J_m} \omega_{m,k}  \\
	& \quad +\frac{K_{\tau}T_s}{J_m}u_k +d_{2,k}, \\
	\theta_{l,k+1} &= \theta_{l,k} + T_s \omega_{l,k} +d_{3,k},\\
	\omega_{l,k+1} &= \omega_{l,k}- \frac{K_aT_s}{J_l}\theta_{l,k} + \frac{K_aT_s}{J_l} \theta_{m,k}, \\
	& \quad -\frac{mghT_s}{J_l}\sin(\theta_{l,k})
	+ d_{4,k}.
\end{aligned}
\right. $$
Here, $\theta_{m,k}$ and $\omega_{m,k}$ denote the angular position and angular velocity of the motor, respectively, while $\theta_{l,k}$ and $\omega_{l,k}$ denote the angular rotation and angular velocity of the link. The control input $u_k$ corresponds to the motor-applied torque. 
The sampling period is $T_s = 0.1\,\mathrm{s}$. The physical parameters are chosen as follows: the motor inertia is $J_m = 0.0037 \,\mathrm{kg m^2}$, the link inertia is $J_l = 0.0093\,\mathrm{kgm^2}$, the torsional spring constant is $K_a = 0.18 \,\mathrm{N m/rad}$, the viscous friction coefficient $B_m =0.046 \,\mathrm{N m/V}$, and the amplifier gain $K_{\tau} = 0.08\,\mathrm{N m/V}$. The nominal gravitational load acting on the link is $mgh$, with mass $m = 0.21\,\mathrm{kg}$, gravitational acceleration $g = 9.81\,\mathrm{m/s^2}$, and link length $h = 0.015\,\mathrm{m}$.

We define the state vector as 
$x_k=\big[ \theta_{m,k} \ \ \omega_{m,k} \ \ \theta_{l,k} \ \ \omega_{l,k} \big]^\top=\big[ x_{1,k} \ \ x_{2,k} \ \ x_{3,k} \ \ x_{4,k} \big]^\top$, 
the disturbance vector as 
$d_k=\big[ d_{1,k} \ \ d_{2,k} \ \ d_{3,k} \ \ d_{4,k} \big]^\top$, 
and introduce the lifting as
$Z(x_k)=\big[ x_k^\top \ \ \sin(x_{3,k}) \big]^\top$.
Following \eqref{eq.system1}--\eqref{eq.system}, the flexible-joint robot dynamics can be equivalently expressed in the state-space form as
\begin{align}\label{eq.robot}
	x_{k+1} =\; & 
	\left[ \begin{matrix}
		1 & T_s & 0 & 0 & 0  \\
		-\frac{K_a T_s}{J_m} & 1-\frac{B_m T_s}{J_m} & \frac{K_a T_s}{J_m} & 0 & 0  \\
		0 & 0 & 1 & T_s & 0  \\
		\frac{K_a T_s}{J_l} & 0 & -\frac{K_a T_s}{J_l} & 1 & -\frac{m g h T_s}{J_l}
	\end{matrix} \right] Z(x_k) \nonumber \\
	& + \left[ \begin{matrix}
		0 \ \  \frac{K_{\tau} T_s}{J_m} \ \ 0 \ \ 0
	\end{matrix} \right]^\top u_k + d_k .
\end{align}
For the proposed D-MPC and static feedback control D-FC \cite{HuZJ}, the offline data collection phase is subject to bounded process disturbances satisfying $\|d_k\|_2^2 \le 4 \times 10^{-4}$. We collect data along trajectories of length $N=25$, with randomly generated initial states and inputs whose components lie in the interval $[-1,1]$. The stage cost weights are chosen as $E = 0.1 I$ and $R = 0.1$. For the nonlinear constraint in \eqref{eq.WSG}, we set $S = 0.01 I$, $G = 0.01 I$, and $W = 0.01 I$. State constraints are imposed via \eqref{eq.con_2} and \eqref{eq.con_2_no}, with $S_x = 0.01 I$.  For the D-FC method, the parameter $\beta$ is set to $0.65$.

Fig.~\ref{fig.xno} compares the closed-loop state trajectories obtained under the proposed D-MPC and the D-FC schemes over 100 time steps for system without online process disturbances. The D-FC approach is applicable in both scenarios where only the offline data are corrupted by noise and where both offline and online data are noisy. As expected, all state trajectories converge to the origin. In view of \eqref{eq.robot}, the system state $x_k$ implicitly encodes the information of the lifted variable  $Z(x_k)$; consequently, the trajectories of $Z(x_k)$ also converge to the origin. Meanwhile, Fig.~\ref{fig.xs} illustrates the closed-loop state trajectory under the proposed D-MPC scheme, from which it can be observed that the state constraint \eqref{eq.xsxx} is satisfied.

To verify that the nonlinearities arising in the considered application satisfy the quadratic constraint in \eqref{eq.WSG}, Fig.~\ref{fig.non1} evaluates the nonlinear constraint along both the offline data trajectories and the online closed-loop trajectories generated by the two control schemes. We define 
$J_{nonlinear} = \begin{aligned} Z(x) ^\top \begin{bmatrix} -W & S \\ S^\top & G \end{bmatrix} Z(x) \end{aligned}$ in Fig. \ref{fig.non1}. It can be observed that, since the offline data are generated along open-loop trajectories that may be divergent, the value of $J_{nonlinear}$ increases over time. In contrast, during online operation, both control methods drive $Z(x_k)$ asymptotically to the origin, and hence $J_{nonlinear}$ converges to 0.

Finally, Figs.~\ref{fig.xno}--\ref{fig.non1} show that the proposed online controller achieves superior closed-loop performance compared with D-FC, which employs a static feedback gain obtained by solving the SDP only once using offline data. This performance improvement is attributed to the adaptive nature of the proposed D-MPC, which continuously updates the feedback gain using real-time state measurements, thereby enabling faster stabilization of the closed-loop system.
\begin{figure}[h]
	\centering
	\includegraphics[width=0.48\textwidth]{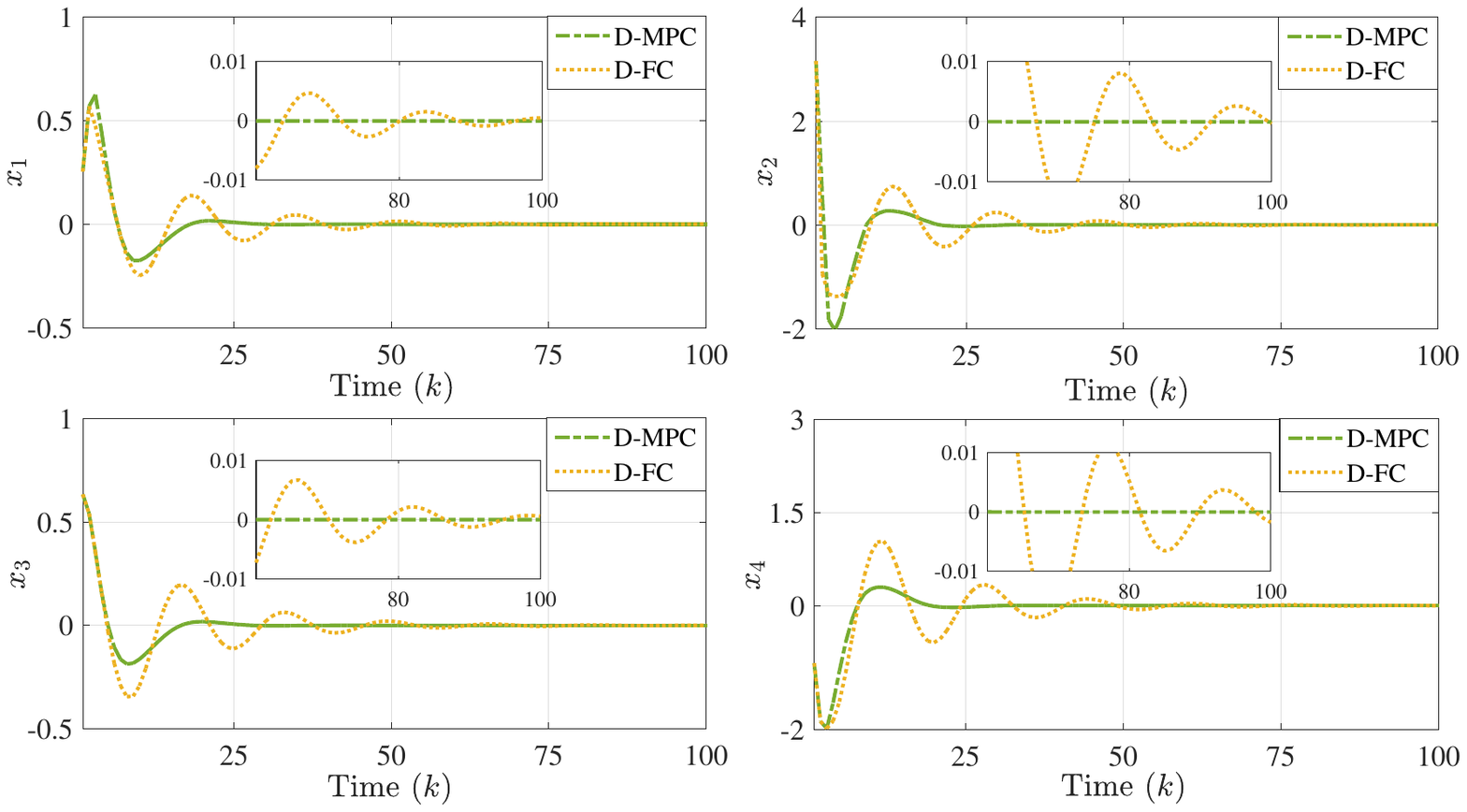} \caption{Closed-loop state trajectories under D-MPC and D-FC with noise-free online measurements}
	\label{fig.xno}
\end{figure}
\begin{figure}[h]
	\centering
	\includegraphics[width=0.4\textwidth]{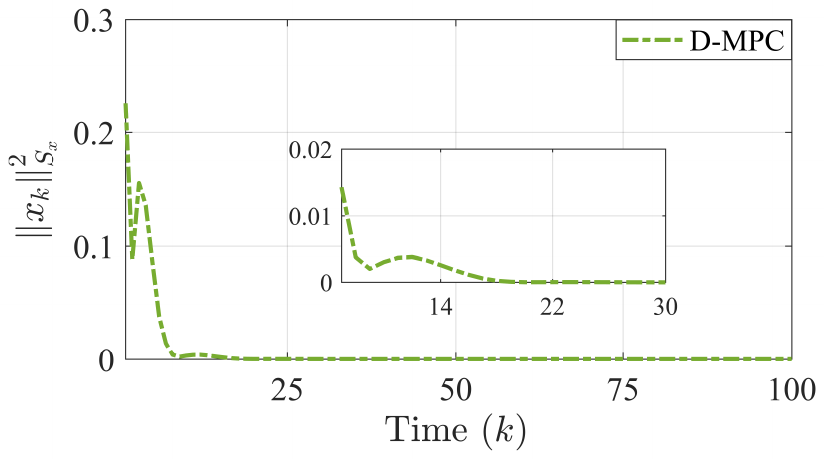} \caption{Closed-loop state constraint under D-MPC.}
	\label{fig.xs}
\end{figure}
\begin{figure}[h]
	\centering
	\includegraphics[width=0.48\textwidth]{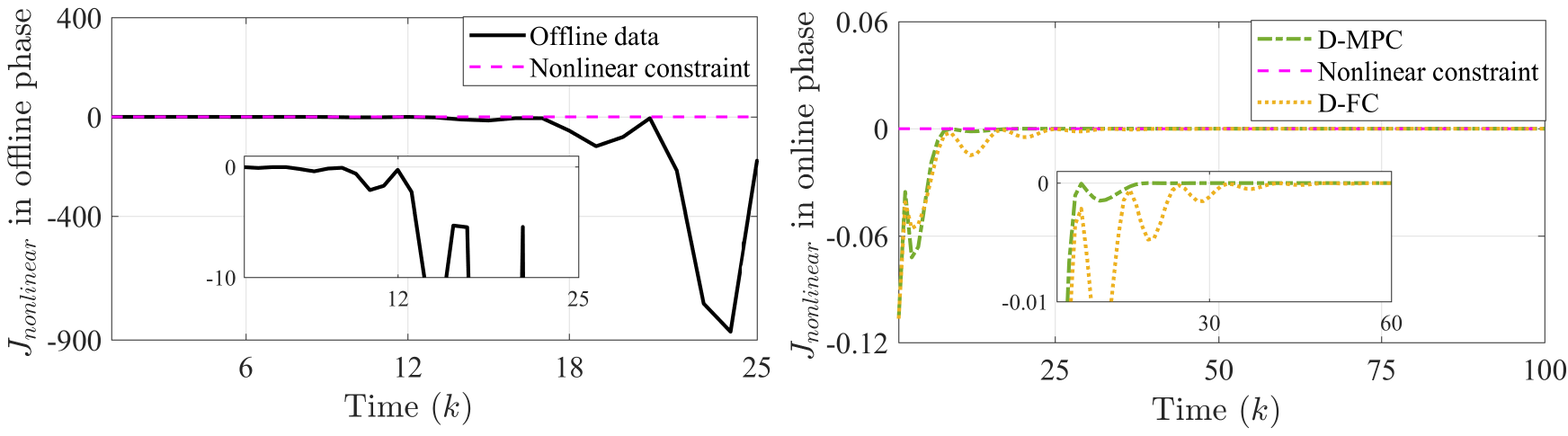} \caption{Nonlinear constraint during the offline collection phase and the online implementation phase}
	\label{fig.non1}
\end{figure}
To further assess the performance of the two controllers, we evaluate three metrics: the closed-loop stage cost, the computation time and nonlinear constraint cost, averaged over 50 independent Monte Carlo experiments. For a single experiment, the cumulative closed-loop stage cost over the horizon $[1, T]$ is defined as
$J = \sum_{k=1}^{T} l(x_k, u_k) = \sum_{k=1}^{T} \big( \|x_k\|_{E}^{2} + \|u_k\|_{R}^{2} \big)$.
The closed-loop cost corresponds to the empirical average over 50 experiments, i.e.,
$\bar{J}= \frac{1}{50} \sum_{i=1}^{50} J_i$,
where $J_i$ denotes the cost obtained in the $i$-th trial. The computation time is defined analogously as the average over 50 runs. 

Table \ref{tab.one} summarizes the closed-loop stage cost, and average computational time for the D-MPC and D-FC schemes. D-MPC attains the lowest closed-loop cost, reflecting its ability to stabilize the closed-loop system more rapidly. Specifically, D-MPC achieves a $36.91\%$ reduction in closed-loop cost relative to D-FC. These results demonstrate that the online controller D-MPC exhibits superior closed-loop control performance compared with the static controller D-FC.
{\begin{table}[ht]
		\caption{Evaluation metrics of four methods}
		\centering	
		\begin{tabular}{cccc} 
			\toprule
			\multirow{1}{*}{Scheme}&{closed-loop cost}&{}&{computation time (s)} \\
			\midrule 
			\multirow{1}{*}{D-MPC} & {3.2989}&{}&{13.7321}  \\ 
			\multirow{1}{*}{D-FC} &5.2286&{}&{/}  \\
			\bottomrule
			\label{tab.one}	
		\end{tabular}
\end{table}}

\subsection{Flexible-Joint Robot with Noisy Online Measurements}\label{sec.sim2}
In this section, we further evaluate the performance of the proposed D-RMPC and D-MPC schemes and compare them with the ID-RMPC and D-FC methods on a flexible-joint robot subject to disturbances during both the offline and online phases. Both the offline data collection phase and the online implementation are subject to bounded process disturbances satisfying $\|d_k\|_2^2 \le 4 \times 10^{-4}$.
For all four control methods, offline data are collected along trajectories of length $N = 25$, with randomly generated initial states and inputs whose components lie in the interval $[-1,1]$. For D-RMPC, D-MPC, and D-FC, the parameters are chosen consistently with those used in Section~\ref{sec.sim1}. In addition, the parameter in the constraint \eqref{eq.LMIdelta_no} is selected as $\alpha = 7 \times 10^{4}$. For the ID-RMPC method, the system matrices are identified via a least-squares procedure using the same offline dataset. During the online phase, the initial state is randomly selected from the set $[0,1]^4$.

Figs.~\ref{fig.x}–\ref{fig.non} present the closed-loop state trajectories and the evolution of the nonlinear constraint under different controllers, respectively. 
The signal-to-noise ratio (SNR) is evaluated by comparing the offline data matrices \eqref{eq.XU} and \eqref{eq.D}.
In Fig.~\ref{fig.x}, the SNR is $60.2$~dB, under which all considered methods achieve robust stabilization of the system.
Although the proposed D-MPC is designed under the assumption of noise-free state measurements during online operation, it remains effective in the presence of online disturbances, albeit with slower convergence compared to the robust schemes. In contrast, the proposed D-RMPC, which explicitly accounts for both offline and online noise, achieves the fastest convergence to a neighborhood around the origin. Moreover, by leveraging time-varying state-feedback policies, D-RMPC converges significantly faster than the static D-FC. The closed-loop system under ID-RMPC converges more slowly than D-RMPC, primarily due to model inaccuracies introduced by noisy offline data. Furthermore, as shown in Fig. \ref{fig.xs1}, the state constraint is satisfied throughout the closed-loop operation of the proposed D-RMPC, D-MPC, and ID-RMPC schemes.  Among the three methods, D-RMPC exhibits the fastest convergence of $\|x_k\|_{S_x}^2$ toward a neighborhood of the origin.
In Fig.~\ref{fig.non}, the offline data satisfy the nonlinear constraint. During the online phase, D-RMPC drives $Z(x_k)$ to the origin earlier than the other controllers, and consequently, the nonlinear constraint cost $J_{nonlinear}$ for D-RMPC asymptotically approaches zero faster.
\begin{figure}[h]
	\centering
	\includegraphics[width=0.48\textwidth]{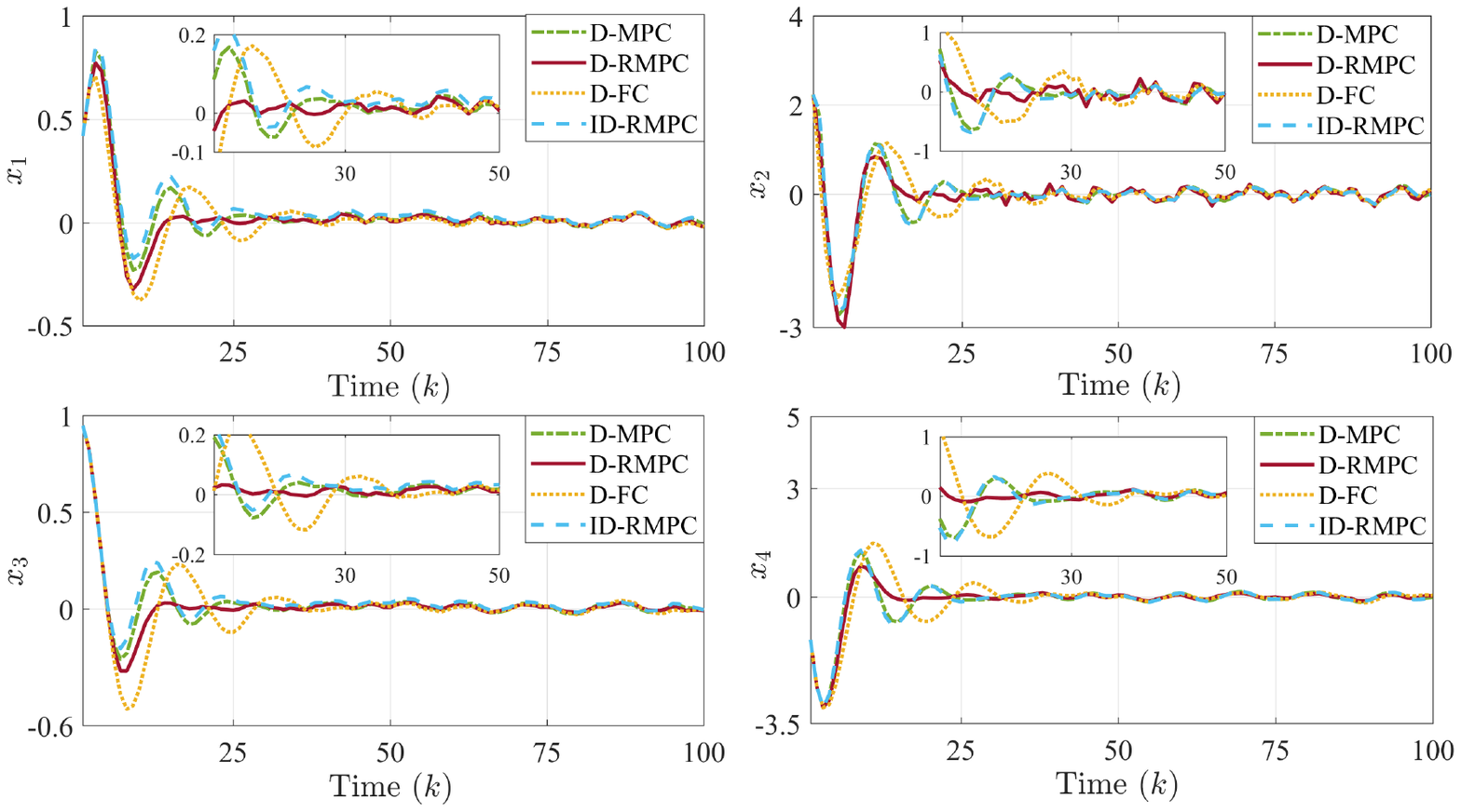} \caption{Closed-loop state trajectories under different methods with noisy online measurements (SNR = 62.2 dB)}
	\label{fig.x}
\end{figure}
\begin{figure}[h]
	\centering
	\includegraphics[width=0.4\textwidth]{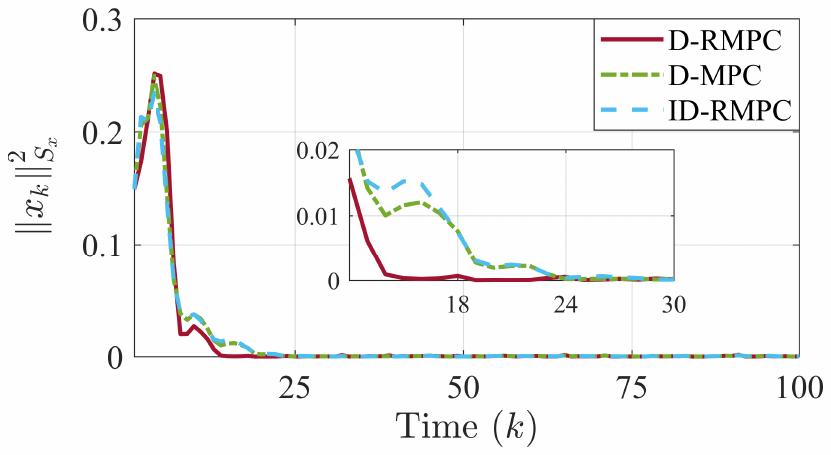} \caption{Closed-loop state constraint under different schemes.}
	\label{fig.xs1}
\end{figure}
\begin{figure}[h]
	\centering
	\includegraphics[width=0.48\textwidth]{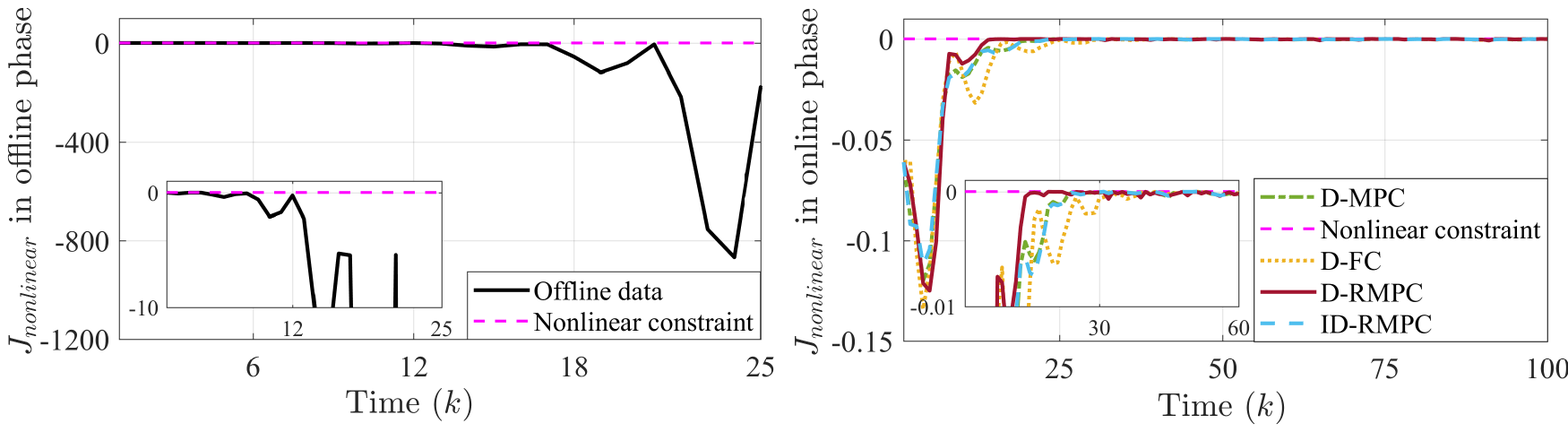} \caption{Nonlinear constraint during the offline collection phase and the online implementation phase}
	\label{fig.non}
\end{figure}

To provide a quantitative comparison, we perform 50 independent Monte Carlo experiments following the procedure described in Section~\ref{sec.sim1}. As summarized in Table~\ref{tab.two}, the proposed D-RMPC achieves the lowest closed-loop cost among all considered methods. Specifically, relative to D-RMPC, both D-MPC and ID-RMPC incur slightly higher closed-loop costs, with increases of approximately $7.91\%$ and $6.19\%$, respectively, whereas D-FC exhibits a substantially larger closed-loop cost, increasing by $60.71\%$. With respect to computation time, ID-RMPC achieves the shortest online runtime, since $(A_s, B_s)$ is identified offline via least-squares estimation and the subsequent online control law is based on the identified model. In contrast, both D-MPC and D-RMPC require solving an SDP involving the dataset $\mathcal{S}$ at each time step, leading to increased online computational time.
{\begin{table}[ht]
		\caption{Evaluation metrics of four methods}
		\centering	
		\begin{tabular}{cccc} 
			\toprule
			\multirow{1}{*}{Scheme}&{closed-loop cost}&{}&{computation time (s)} \\
			\midrule
			\multirow{1}{*}{D-RMPC} & {3.4953}&{}&{13.7220}  \\ 
			\multirow{1}{*}{D-MPC} & {3.7717}&{}&{14.0198}  \\ 
			\multirow{1}{*}{ID-RMPC} &3.7115&{}&{9.2132}  \\
			\multirow{1}{*}{D-FC} &5.6172&{}&{/}  \\
			\bottomrule
			\label{tab.two}	
		\end{tabular}
\end{table}}

Table~\ref{tab.two} reports the closed-loop cost and computation time, which correspond to the total performance cost and total simulation time accumulated over 50 Monte Carlo experiments. To provide further insight into the per-step computational burden associated with solving the SDP, as well as the evolution of the closed-loop performance, Fig.~\ref{fig.V1} shows the average and worst-case closed-loop cost at each time step. In addition, Fig.~\ref{fig.T1} depict the average and worst-case SDP computation time per step. As shown in Fig.~\ref{fig.V1}, D-RMPC attains the smallest closed-loop cost at the initial time step $k=1$. Under identical initial states, this indicates that D-RMPC generates smaller control inputs compared with the other methods. Moreover, both the average and worst-case closed-loop costs under D-RMPC converge to zero relatively quickly. From Fig.~\ref{fig.T1}, ID-RMPC requires significantly less computation time than D-RMPC and D-MPC. Furthermore, when disturbances are present during both the offline data collection and online implementation phases, D-RMPC exhibits slightly lower computational cost than D-MPC.

Indeed, regarding real-time applicability, D-MPC and D-RMPC exhibit relatively large initial computation times, and the per-step SDP solution time mainly ranges from 0.13 to 0.15s, slightly exceeding the adopted sampling period of 0.1s. Therefore, the current implementation should be regarded primarily as a simulation-based validation of the proposed methodology rather than a real-time execution. Nevertheless, the computational burden can be substantially reduced by incorporating warm-start strategies, tailored SDP solvers, and code-generation or low-level implementation techniques, which are expected to enable real-time operation. Investigating systematic approaches to reduce SDP solution time and achieve efficient real-time implementations represents a practically relevant direction for future research.
\begin{figure}[h]
	\centering
	\includegraphics[width=0.48\textwidth]{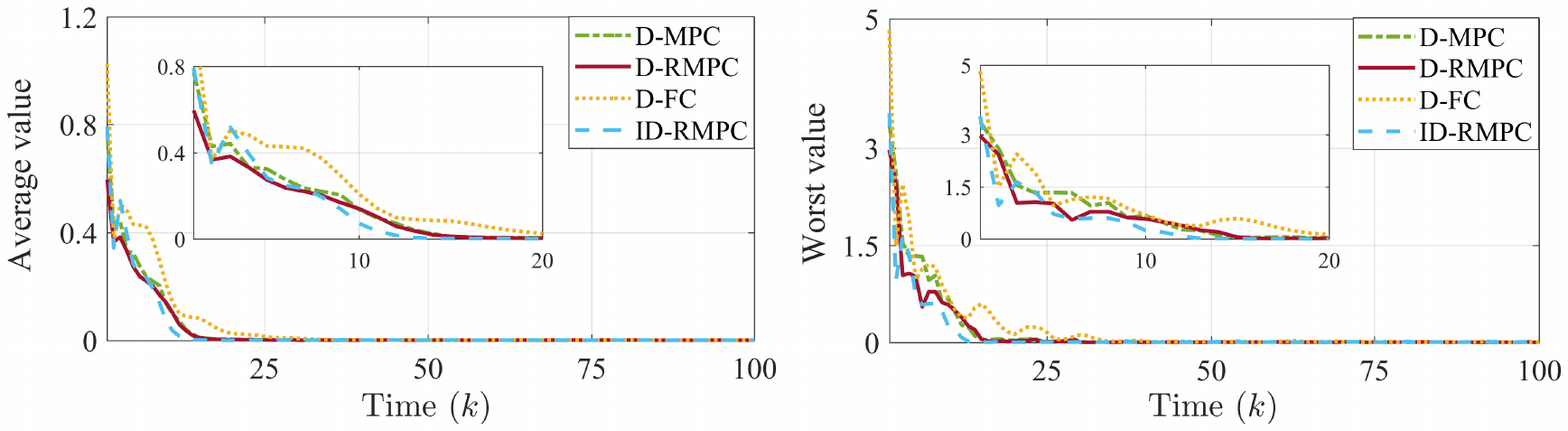} \caption{Average and worst-case closed-loop cost}
	\label{fig.V1}
\end{figure}
\begin{figure}[h]
	\centering
	\includegraphics[width=0.48\textwidth]{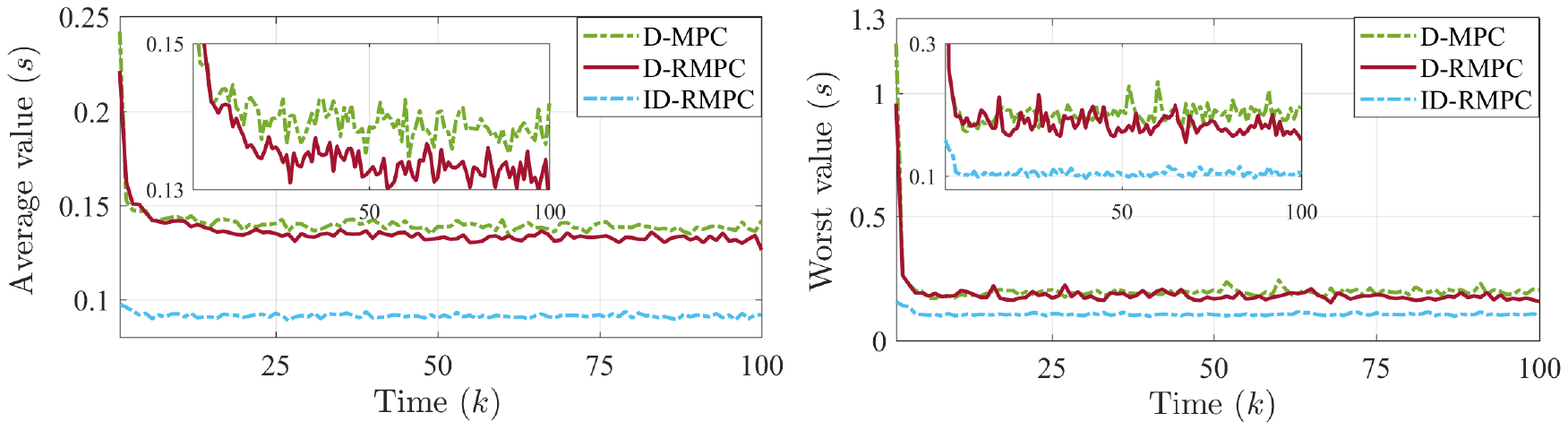} \caption{Average and worst-case SDP computation time}
	\label{fig.T1}
\end{figure}

\subsection{Conservatism and Feasibility of the Proposed Schemes }\label{sec.sim3}
In this section, we further examine the feasibility and conservatism of the proposed schemes using the flexible-joint robot system \eqref{eq.robot}. The analysis is conducted along three dimensions: i) the length of the historical data $N$, ii) the disturbance magnitude (and the associated signal-to-noise ratio, SNR), and iii) the choice of the disturbance bound $\varpi$, particularly when it is overly conservative or overly restrictive.  All parameters of the four schemes are selected consistently with those in Section~\ref{sec.sim2}.

We begin by investigating the effect of the historical data length. As discussed in Remark~9, a lossy S-procedure is employed to derive sufficient conditions ensuring that inequalities~\eqref{eq.augIAB} and \eqref{eq.augIAB_no} hold for all $(A,B)\in\Xi_k$, for $k=0,1,\ldots,N-1$. This relaxation is inherently conservative, and the resulting conservatism of the proposed framework is therefore closely tied to the amount of historical data used. To this end, the four control schemes are implemented using datasets of varying lengths.

Fig. \ref{fig.data} depicts the closed-loop cost and computation time for the four schemes as functions of $N$ over the range $[12,40]$. When the data length is insufficient (e.g., $N=10$), the SDPs in \eqref{sdp} and \eqref{sdpnoise} are infeasible at the initial time step. As $N$ increases to a moderate level, both the closed-loop cost and the computation time decrease, indicating improved performance and reduced conservatism. Notably, in the presence of disturbances during both the offline and online phases, D-RMPC consistently achieves the lowest closed-loop cost among the considered methods. However, as $N$ becomes excessively large, the closed-loop performance of the three data-driven schemes deteriorates and the computational burden increases. This degradation can be attributed to the fact that larger datasets may contain more divergent trajectories, resulting in a less accurate data-driven set-membership representation and, consequently, increased conservatism in the synthesized controller. In contrast, ID-RMPC, which first identifies the system matrices via a least-squares procedure and then designs the controller, exhibits more stable closed-loop performance and computation time with respect to the data length.
\begin{figure}[h]
	\centering
	\includegraphics[width=0.48\textwidth]{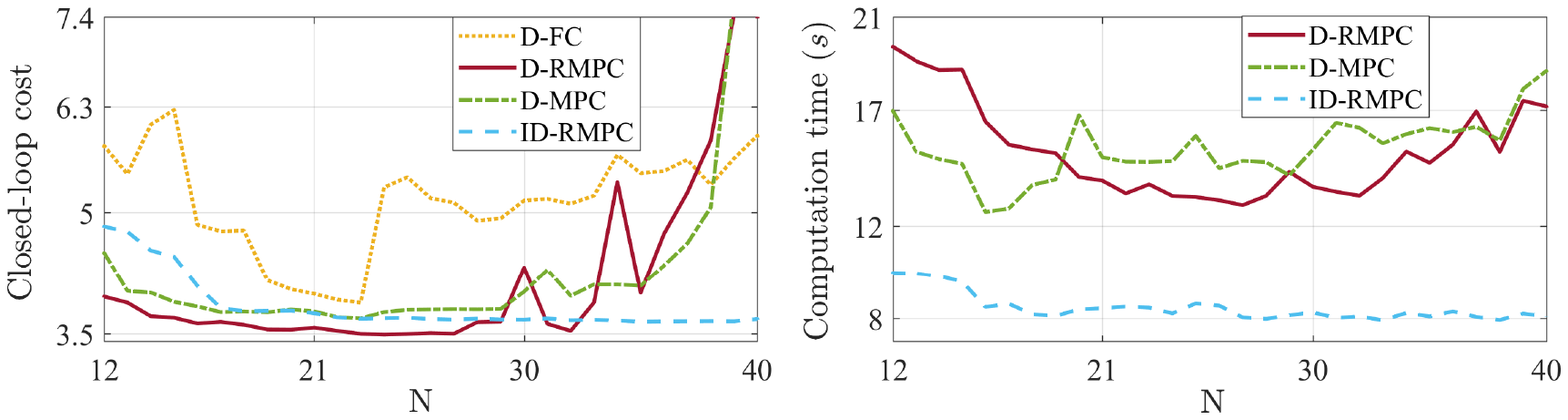} \caption{Closed-loop cost  and  computation time under four control schemes with different values of $N$}
	\label{fig.data}
\end{figure}

We next examine the effect of disturbance magnitude on the feasibility and closed-loop performance of the proposed schemes. In Fig.~\ref{fig.xno}, the disturbances are generated within the range $[-0.01,0.01]^4$. Both relatively large and small disturbance levels are considered in this section. Fig.~\ref{fig.high} depicts the closed-loop responses of the four control schemes when the disturbance lies in $[-0.1,0.1]^4$, corresponding to an SNR of 40.6~dB. Under this disturbance level, D-FC and D-MPC fail to stabilize the system and exhibit divergent behavior, whereas the robust schemes D-RMPC and ID-RMPC successfully stabilize the closed loop and ensure convergence to the RPI set. When the disturbance magnitude exceeds $[-0.1,0.1]^4$, the SDP problems \eqref{sdp} and \eqref{sdpnoise} become infeasible at the initial time step. Fig.~\ref{fig.low} reports the closed-loop performance when the disturbance is generated within $[-0.0001,0.0001]^4$, corresponding to an SNR of 135.6~dB. In this regime, all four control schemes robustly stabilize the system. Moreover, when the disturbance level is sufficiently small, the static controller D-FC exhibits a faster convergence rate than the remaining methods.
\begin{figure*}[h]
	\centering
	\includegraphics[width=0.9\textwidth]{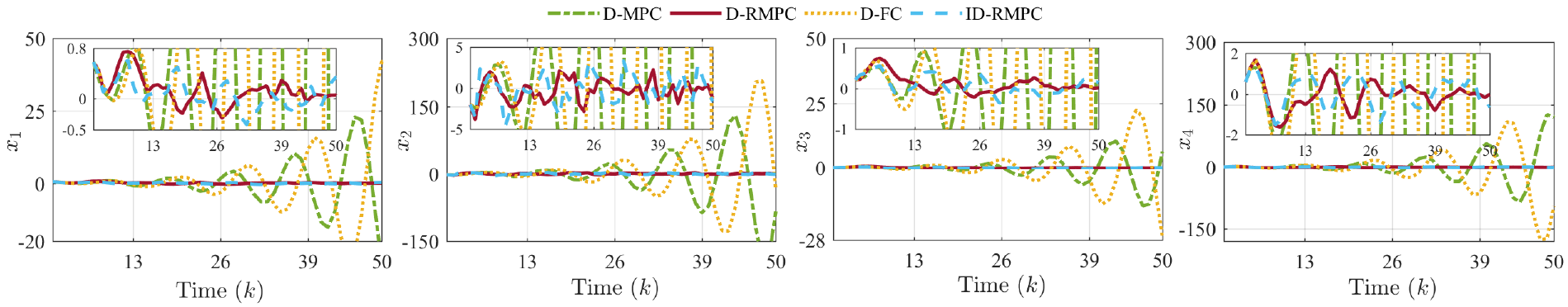} \caption{Closed-loop state trajectories under four control schemes with $d_k \in [-0.1,0.1]^4$ (SNR = 40.6 dB)}
	\label{fig.high}
\end{figure*}
\begin{figure*}[h]
	\centering
	\includegraphics[width=0.9\textwidth]{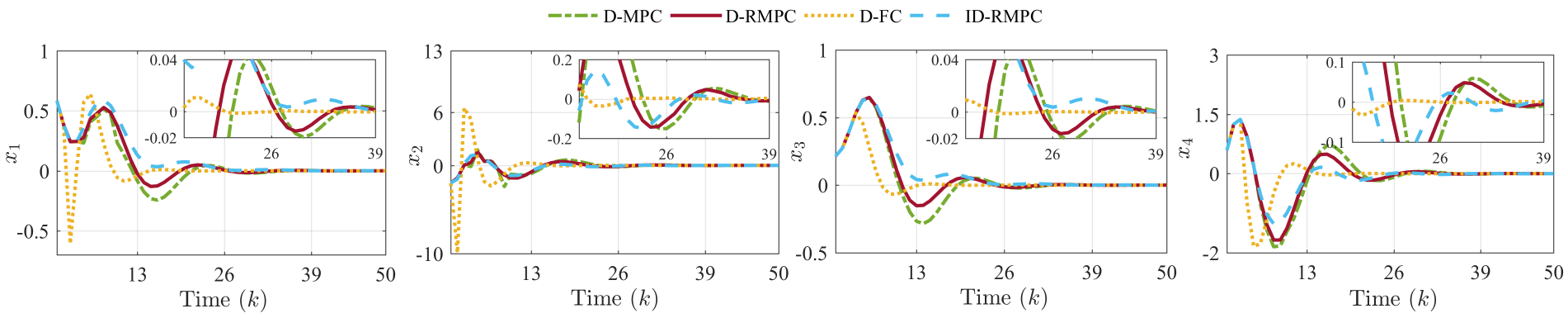} \caption{Closed-loop state trajectories under four control schemes with $d_k \in [-0.0001,0.0001]^4$ (SNR = 135.6 dB)}
	\label{fig.low}
\end{figure*}

We further examine the sensitivity of the proposed methods to inaccuracies in the disturbance bound $\varpi$. The control performance of D-MPC without online disturbances and that of D-RMPC under online disturbances are considered.
The system and parameter settings are chosen consistently with those in Section~\ref{sec.sim2}. The disturbance is generated within the set $[-0.01,0.01]^4$, for which the nominal bound satisfies $\varpi^2 = 4 \times 10^{-4}$. In addition, we consider two mismatched cases in which the assumed bound is either larger or smaller than the true disturbance magnitude, namely $\varpi^2 = 2 \times 10^{-3}$ and $\varpi^2 = 3 \times 10^{-5}$, respectively, as illustrated in Figs.~\ref{fig.bound1}–\ref{fig.bound}.

When $\varpi$ is underestimated, the closed-loop performance deteriorates and the state trajectories exhibit pronounced oscillations, particularly in the presence of disturbances during both the offline and online phases. This behavior arises because an underestimated $\varpi$ fails to cover all disturbance realizations, which compromises the construction of the data-driven set-membership representation. Moreover, the resulting RPI set is smaller than the true RPI set, ultimately leading to degraded closed-loop performance. Conversely, when $\varpi$ is overestimated, the closed-loop performance is also inferior to the nominal case with $\varpi^2 = 4 \times 10^{-4}$. The resulting RPI set becomes overly conservative, and the system state may initially lie entirely within the RPI set, causing the controller to rely on a static state-feedback gain and yielding degraded performance.
\begin{figure*}[h]
	\centering
	\includegraphics[width=0.9\textwidth]{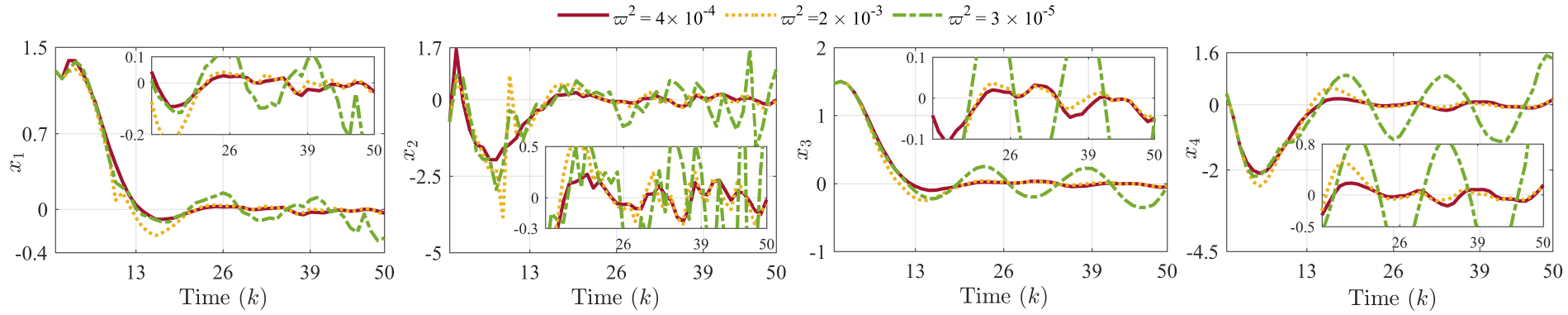} \caption{Closed-loop state trajectories under D-RMPC with different values of $\varpi$}
	\label{fig.bound1}
\end{figure*}
\begin{figure*}[h]
	\centering
	\includegraphics[width=0.9\textwidth]{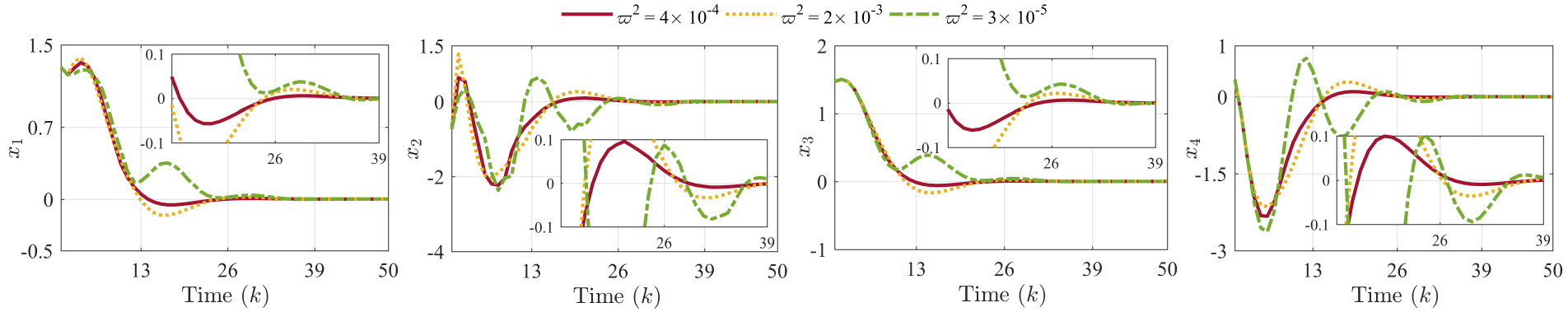} \caption{Closed-loop state trajectories under D-MPC with different values of $\varpi$}
	\label{fig.bound}
\end{figure*}

\subsection{The Impact of the Misspecified Dictionary $Z(x_k)$}\label{sec.sim4}
This section investigates the impact of an incorrect selection of the function dictionary on the performance of the proposed schemes. Since the designs of D-MPC and D-RMPC share a common structure, we focus, for brevity, on D-MPC operating in the absence of online disturbances. The flexible-joint robot system and all algorithmic parameters are chosen consistently with those in Section~\ref{sec.sim1}, where the nonlinear component of the nominal function dictionary $Z(x_k)$ is chosen as $Q(x_k) = \sin(x_{3,k})$, while the linear component is set to $x_k$. To evaluate the sensitivity of the proposed approach to modeling mismatch, we consider several misspecified function dictionaries, namely $Q(x_k)=\cos(x_{3,k})$, $Q(x_k)=\sin(x_{2,k})$, and $Q(x_k)=3\sin(x_{3,k})$. The corresponding simulation results are reported in Fig.~\ref{fig.di}.

These results demonstrate how deviations from the nominal dictionary affect closed-loop performance and feasibility. Specifically, Fig.~\ref{fig.di} presents the closed-loop state trajectories of D-MPC over a simulation horizon of 50 time steps. When $Q(x_k)=\cos(x_{3,k})$, certain state components fail to converge asymptotically to the origin. For the other two misspecified choices of $Q(x_k)$, closed-loop stability is preserved; however, the resulting performance is significantly degraded compared with the nominal case.
\begin{figure*}[h]
	\centering
	\includegraphics[width=0.9\textwidth]{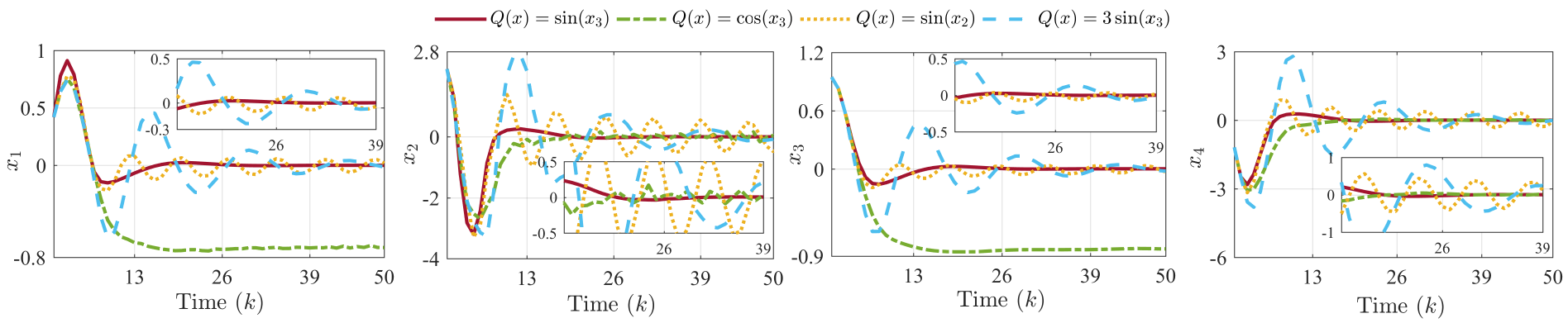} \caption{Closed-loop state trajectories under D-MPC with different $Q(x_k)$}
	\label{fig.di}
\end{figure*}

\section{Conclusions}\label{sec.con} 
We presented a data-driven robust min–max MPC framework for unknown nonlinear systems with process disturbances. By representing the nonlinear vector field via a known basis (yielding a lifted linear form) and using offline noisy data to define a set-membership characterization of system matrices, we proposed a data-driven MPC problem with the ellipsoidal state constraint. Two scenarios were investigated: in one, the online system is not affected by process disturbances, while in the other, it is subject to such disturbances. In each case, an SDP was formulated to derive a state-feedback control gain. We proved that the resulting controllers guarantee recursive feasibility, closed-loop stability, and constraint satisfaction. Numerical examples demonstrated the effectiveness of the approach in handling nonlinear dynamics and disturbances compared with other data-driven controllers. Future work will explore extensions to distributed systems and high-order fully actuated systems, as well as reductions in SDP computation time.

\bibliographystyle{plainnat}
\bibliography{reference}
\end{document}